\newtheorem{theorem}{Theorem}[section]
\newtheorem{lemma}[theorem]{Lemma}
\newtheorem{corollary}[theorem]{Corollary}
\begin{document}
\begin{frontmatter}

\title{Killed Brownian motion with a prescribed lifetime distribution
and models of default}
\runtitle{Killed Brownian motion}

\begin{aug}
\author[A]{\fnms{Boris} \snm{Ettinger}\ead[label=e1]{ettinger@princeton.edu}},
\author[B]{\fnms{Steven N.} \snm{Evans}\corref{}\ead[label=e2]{evans@stat.berkeley.edu}\thanksref{t1}}
\and
\author[C]{\fnms{Alexandru} \snm{Hening}\ead[label=e3]{hening@stats.ox.ac.uk}}
\runauthor{B. Ettinger, S. N. Evans and A. Hening}
\affiliation{Princeton University, University of California and University of Oxford}
\address[A]{B. Ettinger\\
Department of Mathematics\\
Princeton University\\
Fine Hall, Washington Road\\
Princeton, New Jersey 08544-1000\\
USA\\
\printead{e1}}
\address[B]{S. N. Evans\\
Department of Statistics\\
University of California\\
367 Evans Hall \#3860\\
Berkeley, California 94720-3860\\
USA\\
\printead{e2}} 
\address[C]{A. Hening\\
Department of Statistics\\
University of Oxford\\
1 South Parks Road\\
Oxford OX1 3TG\\
United Kingdom\\
\printead{e3}}
\end{aug}
\thankstext{t1}{Supported in part by NSF grant DMS-09-07630.}

\received{\smonth{12} \syear{2011}}
\revised{\smonth{9} \syear{2012}}

%
\begin{abstract}
The inverse first passage time problem asks whether, for a Brownian
motion $B$ and a nonnegative random variable $\zeta$, there exists a
time-varying barrier $b$ such that $\mathbb{P}\{B_s > b(s), 0 \leq s
\leq t\} = \mathbb{P}\{\zeta> t\}$. We study a ``smoothed'' version of
this problem and ask whether there is a ``barrier'' $b$ such that
$\mathbb{E}[\exp(-\lambda\int_0^t \psi(B_s - b(s))\,ds)] =
\mathbb{P}\{\zeta> t\}$, where $\lambda$ is a killing rate parameter,
and $\psi\dvtx\mathbb{R} \to[0,1]$ is a nonincreasing function. We
prove that if $\psi$ is suitably smooth, the function $t \mapsto
\mathbb{P}\{\zeta> t\}$ is twice continuously differentiable, and the
condition $0 < -\frac{d \log\mathbb{P}\{\zeta> t\}}{dt} < \lambda$
holds for the hazard rate of $\zeta$, then there exists a unique
continuously differentiable function $b$ solving the smoothed problem.
We show how this result leads to flexible models of default for which
it is possible to compute expected values of contingent claims.
\end{abstract}

%
\begin{keyword}[class=AMS]
\kwd{60J70}
\kwd{91G40}
\kwd{91G80}
\end{keyword}
\begin{keyword}
\kwd{Credit risk}
\kwd{inverse first passage time problem}
\kwd{killed Brownian motion}
\kwd{Cox process}
\kwd{stochastic intensity}
\kwd{Feynman--Kac formula}
\end{keyword}

\pdfkeywords{60J70, 91G40, 91G80, Credit risk, inverse first passage time problem, killed Brownian motion,
Cox process, stochastic intensity, Feynman--Kac formula}

\end{frontmatter}

\section{Introduction}\label{sec1}
Investors are exposed to credit risk, or counterparty risk, due to the
possibility that one or more counterparties in a financial agreement
will default, that is, not honor their obligations to make certain
payments. Counterparty risk has to be taken into account when pricing a
transaction or portfolio, and it is necessary to model the occurrence
of default jointly with the behavior of asset values.

The default time is sometimes modeled as the first passage time of a
credit index process below a barrier. Black and Cox \cite{BC76} were
among the first to use this approach. They define the time of default
as the first time the ratio of the value of a firm and the value of its
debt falls below a constant level, and they model debt as a zero-coupon
bond and the value of the firm as a geometric Brownian motion. In this
case, the default time has the distribution of the first-passage time
of a Brownian motion (with constant drift) below a certain barrier.

Hull and White \cite{HW01} model the default time as the first time a
Brownian motion hits a given time-dependent barrier. They show that
this model gives the correct market credit default swap and bond prices
if the time-dependent barrier is chosen so that the first passage time
of the Brownian motion has a certain distribution derived from those
prices. Given a distribution for the default time, it is usually
impossible to find a closed-form expression for the corresponding
time-dependent barrier, and numerical methods have to be used.

We adopt a perspective similar to that of Hull and White \cite{HW01}.
Namely, we model the default time as
%
%
%
\begin{equation}
\label{erandomtime} \tau:=\inf \biggl\{t>0\dvtx\lambda\int_0^t
\psi \bigl(Y_s-b(s) \bigr)\,ds> U \biggr\},
\end{equation}
where the diffusion $Y$ is some credit index process, $U$ is an
independent mean one exponentially distributed random variable, $0\leq
\psi\leq1$ is a suitably smooth, nonincreasing function with
$\lim_{x \to-\infty} \psi(x) = 1$ and $\lim_{x \to+ \infty} \psi
(x) = 0$, and \mbox{$\lambda>0$} is a rate parameter. Then
%
%
%
\begin{equation}
\label{ePtau} \mathbb{P}\{\tau>t\}=\mathbb{E} \biggl[\exp \biggl(-\lambda\int
_0^t\psi \bigl(Y_s-b(s) \bigr)\,ds
\biggr) \biggr].
\end{equation}
The random time $\tau$ is a ``smoothed-out'' version of the stopping
time of Hull and White; instead of killing $Y$ as soon at it crosses
some sharp, time-dependent boundary, we kill $Y$ at rate $\lambda\psi(y
- b(t))$ if it is in state $y \in\mathbb{R}$ at time $t \geq 0$. That
is,
\[
\lim_{\Delta t \downarrow0} \mathbb{P} \bigl\{\tau\in(t,t+\Delta t) \mid
(Y_s)_{0 \leq s \leq t}, \tau> t \bigr\} / \Delta t = \lambda\psi
\bigl(Y_t - b(t) \bigr).
\]
When the credit index value $Y_t$ is large, corresponding to a time $t$
when the counterparty is in sound financial health, the killing rate
$\lambda\psi(Y_t - b(t))$ is close to $0$ and default in an ensuing
short period of time is unlikely, whereas the killing rate is close to
its maximum possible value, $\lambda$, when $Y_t$ is low and default is
more probable. Note that if we consider a family of $[0,1]$-valued,
nonincreasing functions $\psi$ that converges to the indicator
function of the set $\{x \in\mathbb{R} \dvtx x < 0\}$ and $\lambda$
tends to $\infty$, then the corresponding stopping time $\tau$
converges to the Hull and White stopping time $\inf\{t > 0 \dvtx Y_t <
b(t)\}$.

The hazard rate of the random time $\tau$ is
%
%
%
\begin{eqnarray}
\label{ehazard} && \frac{\mathbb{P}\{\tau\in dt \mid\tau> t\}}{dt}
\nonumber
\\
&&\qquad:= \lim_{\Delta t\downarrow0} \frac{\mathbb{P}\{\tau\in
(t,t+\Delta t)\}}{\Delta t\mathbb{P}\{\tau> t\}}
\nonumber
\\
&&\qquad= \lim_{\Delta t\downarrow0} \frac{\mathbb{P} \{\lambda
\int_0^t \psi(Y_s-b(s))\,ds\leq U\leq\lambda\int_0^{t+\Delta t}\psi
(Y_s-b(s))\,ds \}}{\Delta t\mathbb{P} \{\lambda\int_0^t
\psi(Y_s-b(s))\,ds\leq U \}}
\\
&&\qquad= \lim_{\Delta t\downarrow0} \frac{\mathbb{E} [e^{-\lambda
\int_0^t \psi(Y_s-b(s))\,ds}-e^{- \lambda\int_0^{t+\Delta t}\psi
(Y_s-b(s))\,ds } ]}{\Delta t\mathbb{E} [\exp(-\lambda
\int_0^t \psi(Y_s-b(s))\,ds ) ]}
\nonumber
\\
&&\qquad= \frac{ \lambda\mathbb{E} [\psi(Y_t-b(t)) \exp
(-\lambda\int_0^t \psi(Y_s-b(s))\,ds ) ]}{\mathbb
{E} [\exp(-\lambda\int_0^t \psi(Y_s-b(s))\,ds
) ]}.
\nonumber
\end{eqnarray}
%

On the other hand, suppose that $\zeta$ is a
nonnegative random variable with
survival function $t \mapsto G(t):= \mathbb{P}\{\zeta> t\}$.
Writing $g$ for the derivative of $G$, the corresponding hazard rate is
\[
-\frac{g(t)}{G(t)} = - \frac{d}{dt} \log G(t).
\]
As a result, a necessary condition for a function $b$ to exist such that
the corresponding random time $\tau$ has the same distribution as
$\zeta$
is that
%
%
%
\begin{equation}
\label{Ehazardbound} 0 < -g(t) < \lambda G(t),\qquad t\geq0.
\end{equation}

We show in Theorem~\ref{tglobalEandU} that if $Y$ is a Brownian motion
with a given suitable random initial condition, assumption
(\ref{Ehazardbound}) holds, and the survival function $G$ is twice
continuously differentiable, then there is a unique differentiable
function~$b$ such that the stopping time $\tau$ has the same
distribution as $\zeta$. In particular, we establish that the function
$b$ can be determined by solving a system consisting of a parabolic
linear PDE with coefficients depending on $b$ and a nonlinear ODE
for~$b$ with coefficients depending on the solution of the PDE. Note from
(\ref{ePtau}) that changing the function $b$ on a set with Lebesgue
measure zero does not affect the distribution of $\tau$, and so we have
to be careful when we talk about the uniqueness of $b$. This minor
annoyance does not appear if we restrict to continuous $b$.



In Theorem~\ref{Thardbarrier} we give an analogue of the existence part
of the above result when $\psi$ is the indicator of the set $\{x \in
\mathbb{R} \dvtx x < 0\}$.

Having proven the existence and uniqueness of a barrier $b$, we
consider the pricing of certain contingent claims in
Section~\ref{Spricing}. For simplicity, we take the asset price
$(X_t)_{t \geq 0}$ to be a geometric Brownian motion
\[
\frac{dX_t}{X_t}=\mu\,dt+\sigma\,dW_t,
\]
where $W$ is a standard Brownian motion. We take the
credit index $(Y_t)_{t\geq0}$
to be given by
\[
dY_t=dB_t,
\]
where $B$ is another standard Brownian motion, and take the default
time to be given by (\ref{erandomtime}), where the exponential random
variable $U$ is independent of the asset price $X$ and the credit index
$Y$. We assume that the Brownian motions $W$ and $B$ are correlated;
that is, that their covariation is $[B,W]_t=\rho t$ for some constant
$\rho\in[-1,1]$. We consider claims with a payoff of the form
$F(X_T)1\{\tau>T\}$ for some fixed maturity $T$. We show how it is
possible to compute conditional expected values such as
\[
\mathbb{E} \bigl[F(X_T) 1\{\tau>T\} \mid(X_s)_{0\leq s\leq t},
\tau>t \bigr].
\]

In Section~\ref{Snumerical} we report the results of some experiments
where we solved the PDE/ODE system for the barrier $b$ numerically.
Finally, in Section~\ref{Scalibration}, we follow~\cite{DP11} to
demonstrate how it is possible to use market data on credit default
swap prices to determine the survival function $G$.

\subsection{The FPT and IFPT problems}

We end the this \hyperref[sec1]{Introduction} with a brief discussion
of the literature dealing with first passage times of diffusions across
time-dependent barriers.

Consider a Brownian motion $(B_t)_{t\geq0}$ defined on a filtered
probability space $(\Omega,\mathbb{P}, \mathcal{F},(\mathcal
{F}_t)_{t\geq0})$ which satisfies the usual conditions. Define the
diffusion $(Y_t)_{t \geq 0}$ via the SDE
\[
dY_t = \mu(Y_t,t)\,dt + \sigma(Y_t,t)
\,dB_t,
\]
where we assume that the coefficients
$\mu\dvtx\mathbb{R}\times\mathbb{R}_+\rightarrow\mathbb{R}$ and
$\sigma\dvtx\mathbb{R}\times\mathbb{R}_+\rightarrow\mathbb{R}_+$ are
such that the SDE has a unique strong solution.

For a Borel function
$b\dvtx\mathbb{R}_+\rightarrow\overline{\mathbb{R}}:=
\mathbb{R}\cup\{\pm\infty\}$, the first passage time of the diffusion
process $Y$ below the barrier $b$ is the stopping time
%
%
%
\begin{equation}
\label{eboundarycrossingtime} \tilde{\tau}=\inf \bigl\{t>0\dvtx Y_t
< b(t) \bigr\}.
\end{equation}
The following two problems related to this notion have been discussed
in the literature.

\textit{The first passage time problem} (\textit{FPT}): For a given barrier
$b\dvtx\mathbb{R}_+ \rightarrow\overline{\mathbb{R}}$, compute the
survival function~$G$ of the first time that $X$ goes below $b$; that
is, find
%
%
%
\begin{equation}
\label{esurvival} G(t):=\mathbb{P}\{\tilde{\tau}>t\}, \qquad t \geq 0.
\end{equation}

\textit{The inverse first passage time problem} (\textit{IFPT}): For a given
survival function~$G$, does there exist a barrier $b$ such that
$G(t)=\mathbb{P}\{\tilde{\tau}>t\}$ for all $t\geq0$?

A large class of first passage time problems may be
solved within a PDE framework.
Let
$u(x,t)=\frac{\partial}{\partial x} \mathbb{P}\{Y_t\leq x, \tilde
{\tau}>t\}$
be the sub-probability density of the diffusion $Y$ killed at $\tilde
\tau$.
Then, by the Kolmogorov forward equation, $u$ satisfies
%
%
%
\begin{equation}
\label{eKolmFPT} \cases{u_t(x,t) = \frac{1}{2} \bigl(
\sigma^2 u \bigr)_{xx} - (\mu u)_x, &\quad
$x>b(t)$, $t>0$, \vspace*{2pt}
\cr
u(x,t) =0, &\quad$x\leq b(t)$, $t>0$,
\vspace*{2pt}
\cr
u(x,0)= f(x), &\quad$x \in\mathbb{R}$,}
\end{equation}
where $f$ is the probability density of $Y_0$.
For nice enough functions $b$ this system has a unique solution, and we
can express the survival probability as
\[
G(t) =\mathbb{P}\{\tilde{\tau}>t\} = \int_{b(t)}^{\infty}
u(x,t)\,dx, \qquad t\geq0.
\]
This approach is used in \cite{L86,V09}
to get closed form solutions for some classes of boundaries.
An integral equation technique is used in
\cite{P2002a,P2002b,PS06,V09} to find the derivative $g(t) = G'(t)$ in
the FPT problem for a Brownian motion. Writing $\Psi(z):=
\int_z^\infty\frac{1}{\sqrt{2 \pi}} \exp(-\frac{x^2}{2} )\,dx$, the
derivative $g$ satisfies a Volterra integral equation of the first kind
of the form
\[
\Psi \biggl(\frac{b(t)}{\sqrt{t}} \biggr) = -\int_0^t
\Psi \biggl( \frac{b(t)-b(s)}{\sqrt{t-s}} \biggr)g(s)\,ds.
\]
This and other such integral equations can be used
to find $g$ numerically.

Shiryaev is generally credited with introducing the IFPT problem in
1976 (we have not been able to find an explicit reference).
Most authors have investigated numerical methods for finding the
boundary. Details can be found in \cite{HW00,HW01,IK02,ZSP03}.
It is shown in \cite{AZ01} that for sufficiently smooth boundaries the
density $u(x,t)$ and the boundary $b(t)$ are a solution of the
following free boundary problem:
%
%
%
\begin{equation}
\label{eKolmIFPT} \cases{\displaystyle u_t(x,t) = \frac{1}{2}
\bigl(\sigma^2u \bigr)_{xx} - (\mu u)_x, &\quad
$x>b(t)$, $t>0$, \vspace*{2pt}
\cr
u(x,t) =0, &\quad$x\leq b(t)$, $t>0$,
\vspace*{2pt}
\cr
u(x,0)= f(x), &\quad$x \in\mathbb{R}$, \vspace*{2pt}
\cr
\displaystyle G(t) = \int_{b(t)}^{\infty} u(x,t)\,dx, &
\quad $t\geq0$,}
\end{equation}
where $f$ is again the probability density of $Y_0$. The existence and
uniqueness of a viscosity solution of (\ref{eKolmIFPT}) is established
in \cite{CCCS11} along with upper and lower bounds on the asymptotic
behavior of $b$. That paper also shows that this $b$ does in fact
produce a boundary that gives the survival function $G$. To our
knowledge it has not be proven that a strong solution to the system
(\ref{eKolmIFPT}) exists, nor that there is a smooth $b$ solving the
IFPT.

A variation of the IFPT is studied in \cite{DP11,DaPi13}. There the barrier is
fixed at zero (i.e., $b\equiv0$), and it is the volatility parameter
$\sigma(\cdot, \cdot)$, that is, allowed to vary. The authors show that
this problem admits an explicit solution for every differentiable
survival function.

\section{Global existence and uniqueness}

Suppose for the remainder of this paper that $Y_t:= Y_0+B_t$, where
$(B_t)_{t \geq 0}$ is a standard Brownian motion, and $Y_0$ is a random
variable, independent of $B$ and with density $f \in C^2(\mathbb{R})$.
In this case, (\ref{ePtau})~is
\[
G(t) = \int_{\mathbb{R}} \mathbb{E} \biggl[\exp \biggl(-\lambda \int
_0^t \psi \bigl(x + B_s - b(s)
\bigr)\,ds \biggr) \biggr]f(x)\,dx,
\]
which, by time reversal, becomes
\[
G(t) = \int_{\mathbb{R}} \mathbb{E} \biggl[\exp \biggl(-\lambda \int
_0^t \psi \bigl(x + B_{t-s} - b(s)
\bigr)\,ds \biggr) f(x + B_t) \biggr]\,dx.
\]
Set
%
%
%
\begin{equation}
u(x,t):= \mathbb{E} \biggl[\exp \biggl(-\lambda\int_0^t
\psi \bigl(x + B_{t-s} - b(s) \bigr)\,ds \biggr) f(x + B_t)
\biggr].
\end{equation}
That is, $u$ is the sub-probability density of
$Y$ killed at the random time $\tau$.
It is well known that if $u$ is smooth enough, then $u$ is the unique
solution of the PDE
\[
\cases{\displaystyle u_t(x,t) = \tfrac{1}{2}u_{xx}(x,t)
- \lambda\psi \bigl(x-b(t) \bigr) u(x,t), &\quad $x \in\mathbb{R}$, $t>0$,
\vspace*{2pt}
\cr
u(x,0) =f(x), &\quad $x \in\mathbb{R}$.}
\]
Any solution to this PDE satisfies
%
%
%
\begin{equation}
\label{evanishing} \lim_{x\rightarrow\pm\infty}u(x,t) = \lim
_{x\rightarrow\pm\infty
}u_x(x,t) = 0, \qquad t > 0.
\end{equation}
Our question as to whether we can find a ``barrier'' $b$ giving us the
survival function~$G$~is now equivalent to whether the system
%
%
%
\begin{equation}
\label{ePDE} \cases{\displaystyle u_t(x,t) = \frac{1}{2}u_{xx}(x,t)
- \lambda\psi \bigl(x-b(t) \bigr) u(x,t), &\quad $x\in\mathbb{R}$, $t> 0$,
\vspace*{2pt}
\cr
u(x,0) =f(x), &\quad $x\in\mathbb{R}$, \vspace*{2pt}
\cr
\displaystyle\int_{\mathbb{R}}u(x,t)\,dx = G(t), &\quad $t\geq0$,}
\end{equation}
has solutions $(u,b)$. Differentiating the third equation from
(\ref{ePDE}) with respect to $t$ and then using the first equation
together with an integration by parts, we get that
%
%
%
\begin{equation}
\label{egpsi} -g(t) = \lambda\int_{\mathbb{R}}\psi \bigl(x-b(t)
\bigr)u(x,t)\,dx,
\end{equation}
where we recall that $g(t) = G'(t)$.
A second differentiation in $t$ followed by another integration by
parts yields
\begin{eqnarray}
\label{eg} \hspace*{25pt} && g'(t) - \lambda^2\int
_{\mathbb{R}}\psi^2 \bigl(x-b(t) \bigr) u(x,t)\,dx
\nonumber
\\
&&\qquad = \lambda\int_{\mathbb{R}} \psi_x \bigl(x-b(t)
\bigr) u(x,t)b'(t)\,dx
\nonumber
- \lambda/2 \int
_{\mathbb{R}} \psi \bigl(x-b(t) \bigr)u_{xx}(x,t)\,dx
\nonumber
\\[-8pt]
\\[-8pt]
&&\qquad = \lambda\int_{\mathbb{R}} \psi_x \bigl(x-b(t)
\bigr)u(x,t) b'(t)\,dx + \lambda/2 \int_{\mathbb{R}}
\psi_x \bigl(x-b(t) \bigr) u_x(x,t)\,dx
\nonumber
\\
&&\qquad = \lambda\int_{\mathbb{R}} \psi_x \bigl(x-b(t)
\bigr)u(x,t) b'(t)\,dx
\nonumber
- \lambda/2 \int
_{\mathbb{R}} \psi_{xx} \bigl(x-b(t) \bigr) u(x,t)\,dx.
\nonumber
\end{eqnarray}
Note that (\ref{eg}) may be rearranged to give an ODE for $b$ of the
form $b'(t) = \Theta(b(t),t)$, where the function $\Theta$ is
constructed from the function $u$ (which, of course, depends in turn
on~$b$). Re-writing this integral equation in the form $b(t) = b(0) +
\int_0^t \Theta(b(s),s)\,ds$ leads to the following theorem, our main
result.

%
%
\begin{theorem}\label{tglobalEandU}
Suppose the following:
\begin{itemize}
\item The survival function $G$ is twice continuously
differentiable with first and second derivatives $g$ and $g'$
and $0 < -g(t) < \lambda G(t)$ for all $t \geq 0$ for some
constant $\lambda
> 0$.

\item The initial density $f$ satisfies
$\int_{\mathbb{R}}f(x)\,dx=1$, $f(x)>0$ for all
$x\in\mathbb{R}$, $f\in C^2(\mathbb{R})$, and the functions
$f$, $f'$, $f''$ are bounded.

\item The function $\psi$ is nonincreasing and belongs to
$C^3(\mathbb{R})$, and for some $h>0$, $\psi(x)=1$ for
$x\leq-h$ and $\psi(x)=0$ for $x\geq h$.
\end{itemize}
Then, there exists a unique continuously differentiable function
$b\dvtx[0,\infty)\rightarrow\mathbb{R}$ such that the following three
equations hold:
%
%
%
\begin{eqnarray}
\label{e1} G(t) &=& \int_{\mathbb{R}} \mathbb{E} \biggl[\exp
\biggl(- \lambda\int_0^t \psi \bigl(x +
B_u - b(u) \bigr)\,du \biggr) \biggr]f(x)\,dx,
\\
-g(t)&=& \lambda\int_{\mathbb{R}} \mathbb{E} \biggl[\exp
\biggl(-\lambda\int_0^t \psi \bigl(x
+B_u - b(u) \bigr)\,du \biggr)
\nonumber
\\[-8pt]
\label{e2}
\\[-8pt]
&&\hspace*{97pt} {}\times \psi \bigl(x + B_t -b(t) \bigr)
\biggr]f(x)\,dx
\nonumber
\end{eqnarray}
and
%
%
%
\begin{eqnarray}\quad
\label{e3} b(t) &=& b(0)
\nonumber
\\
&&{\fontsize{10.4pt} {12pt}\selectfont{\mbox{$\displaystyle{} + \int
_0^t \biggl(\frac{g'(s)
-\lambda^2\int_{\mathbb{R}} \mathbb{E} [\psi^2(x+B_s-b(s))
e^{-\lambda\int_0^s \psi(x+B_r-b(r))\,dr} ] f(x)\,dx
}{\lambda\int_{\mathbb{R}} \mathbb{E} [\psi_x(x+B_s-b(s))
e^{-\lambda\int_0^s \psi(x+B_r-b(r))\,dr} ] f(x)\,dx}$}}}
\\
&&\hspace*{41pt} {\fontsize{10.4pt} {12pt}\selectfont{\mbox{$\displaystyle{} +
\frac{\lambda/2 \int_{\mathbb{R}} \mathbb{E} [\psi_{xx}(x+B_s-b(s))
e^{-\lambda\int_0^s \psi(x+B_r-b(r))\,dr} ] f(x)\,dx
} {\lambda\int_{\mathbb{R}} \mathbb{E} [\psi_x(x+B_s-b(s))
e^{-\lambda\int_0^s \psi(x+B_r-b(r))\,dr} ] f(x)\,dx
} \biggr)\,ds$}}}\hspace*{-9pt}
\nonumber
\end{eqnarray}
for all $t \geq 0$.
\end{theorem}

\begin{pf}
From now on we assume for ease of notation that $\lambda=1$. The
modifications necessary for general $\lambda$ are straightforward. The
proof will be via a sequence of lemmas, all of them assuming the
hypotheses of Theorem~\ref{tglobalEandU} (with $\lambda= 1$). We start
with the following simple observation.

%
%
\begin{lemma}\label{lIVP}
Suppose that
\[
G(t)=\int_{\mathbb{R}}u(x,t)\,dx
\]
for some continuous function $u\dvtx\mathbb{R}\times\mathbb{R}_+ \to
\mathbb{R}$ such that $u(x,t)>0$ for $x\in\mathbb{R}$, $t\geq0$.
Then, for each $t \geq 0$ there exists a unique $b(t) \in\mathbb{R}$
such that
\[
-g(t) = \int_{\mathbb{R}} \psi \bigl(x - b(t) \bigr) u(x,t)\,dx.
\]
\end{lemma}

\begin{pf}
Set
\[
F(t,z) = \int_{\mathbb{R}} \psi(x - z) u(x,t)\,dx.
\]
Then
\begin{eqnarray*}
\lim_{z \to-\infty} F(t,z) &=& \int_{\mathbb{R}} u(x,t)\,dx
=G(t),
\\
\lim_{z \to+\infty} F(t,z) &=& 0
\end{eqnarray*}
and, by assumption,
\[
0<-g(t)<G(t).
\]
Furthermore, $F$ is continuous and strictly decreasing in $z$. So, by
the intermediate value property, we can find a unique $b(t) \in\mathbb
{R}$ such that $F(t,b(t))=-g(t)$.
\end{pf}

%
%
\begin{lemma}[(Global uniqueness)]\label{luniqueness}
Suppose there exist continuous functions $b_1, b_2$ such that equations
(\ref{e1}), (\ref{e2}) and (\ref{e3}) are satisfied for $b=b_1$ and
$b=b_2$. Then $b_1(t)=b_2(t)$ for all $t\geq0$.
\end{lemma}

\begin{pf}
Recall that we are assuming $\lambda= 1$ to simplify notation.

Suppose that $b_1$ and $b_2$ are two continuous solutions of
(\ref{e1}), (\ref{e2}) and (\ref{e3}). It follows from Lemma \ref{lIVP}
and (\ref{e2}) that $b_1(0)=b_2(0)$. Set $V:= \inf\{t \geq 0 \dvtx
b_1(t) \ne b_2(t)\}$, and suppose that $V<\infty$.

Define $\tilde f \in C^2(\mathbb{R})$ by
\[
\tilde f(y)\,dy:= \int_{\mathbb{R}} \mathbb{E} \bigl[\mathbf{1} \{x
+ B_V \in dy\} e^{-\int_0^V \psi(x+B_r-b(r))\,dr} \bigr] f(x)\,dx,
\]
where $b(t)=b_1(t)=b_2(t)$ for $0 \leq t \leq V$. Define functions
$\tilde b_i\dvtx\mathbb{R}_+ \to\mathbb{R}$, $i=1,2$, by $\tilde b_i(t)
= b_i(V+t)$, $t \geq 0$. Then $\tilde b_1(0) = \tilde b_2(0) = b(V)$,
and
\begin{eqnarray*}
\tilde b_i(t) &=& \tilde b_i(0)
\\
&&{\fontsize{10.1pt} {12pt}\selectfont{\mbox{$\displaystyle {}+ \int
_0^t \biggl(\frac{g'(s+V)-
\int_{\mathbb{R}} \mathbb{E} [\psi^2(x+B_s-\tilde b_i(s))
e^{-\int_0^s \psi(x+B_r-\tilde b_i(r))\,dr} ] \tilde f(x)\,dx
}{\int_{\mathbb{R}} \mathbb{E} [\psi_x(x+B_s-\tilde b_i(s))
e^{-\int_0^s \psi(x+B_r-\tilde b_i(r))\,dr} ] \tilde f(x)\,dx}$}}}
\\[4pt]
&&\hspace*{50pt} {\fontsize{10.1pt} {12pt}\selectfont{\mbox{$\displaystyle{} +
\frac{1/2 \int_{\mathbb{R}} \mathbb{E} [\psi
_{xx}(x+B_s-\tilde b_i(s))
e^{-\int_0^s \psi(x+B_r-\tilde b_i(r))\,dr} ] \tilde f(x)\,dx
} {\int_{\mathbb{R}} \mathbb{E} [\psi_x(x+B_s-\tilde b_i(s))
e^{-\int_0^s \psi(x+B_r-\tilde b_i(r))\,dr} ] \tilde f(x)\,dx
} \biggr)\,ds.$}}}
\end{eqnarray*}

Fix $\varepsilon> 0$, and set
\[
K:= \min_{i=1,2} \inf_{0 \leq s \leq \varepsilon} \int
_{\mathbb{R}} \mathbb{E} \bigl[\psi_x
\bigl(x+B_s- \tilde b_i(s) \bigr) e^{-\int_0^s \psi(x+B_r-\tilde b_i(r))\,dr} \bigr]
\tilde f(x)\,dx > 0.
\]
By the triangle inequality, for $0 \leq t \leq \varepsilon$,
\[
\bigl|\tilde b_1(t) - \tilde b_2(t) \bigr| \leq \mathrm{I} +
\mathrm{II} + \mathrm{III},
\]
where
\begin{eqnarray*}
\mathrm{I} &:=& K^{-2} \int_0^t
\bigl|g'(s+V) \bigr| \int_{\mathbb{R}} \mathbb{E} \bigl[ \bigl|
\psi_x \bigl(x+B_s-\tilde b_2(s) \bigr)
e^{-\int_0^s \psi(x+B_r-\tilde b_2(r))\,dr}
\\[1pt]
&&\hspace*{113pt} {}- \psi_x \bigl(x+B_s-\tilde
b_1(s) \bigr) e^{-\int_0^s \psi(x+B_r-\tilde b_1(r))\,dr} \bigr| \bigr]
\\[2pt]
&&\hspace*{100pt} {}\times \tilde f(x)\,dx\,ds,
\\
\mathrm{II} &:=& K^{-2} \int_0^t
\int_{\mathbb{R}} \mathbb{E} \bigl[\psi^2
\bigl(x+B_s- \tilde b_1(s) \bigr) e^{-\int_0^s \psi(x+B_r-\tilde b_1(r))\,dr} \bigr]
\tilde f(x)\,dx
\\
&&{}\times\int_{\mathbb{R}} \mathbb{E} \bigl[ \bigl| \psi_x
\bigl(x+B_s-\tilde b_2(s) \bigr) e^{-\int_0^s \psi(x+B_r-\tilde b_2(r))\,dr}
\\
&&\hspace*{38pt} {}- \psi_x \bigl(x+B_s-\tilde
b_1(s) \bigr) e^{-\int_0^s \psi(x+B_r-\tilde b_1(r))\,dr} \bigr| \bigr] \tilde f(x)\,dx\,ds
\\
&&{}+ K^{-2} \int_0^t \int
_{\mathbb{R}} \mathbb{E} \bigl[ \bigl| \psi^2
\bigl(x+B_s-\tilde b_1(s) \bigr) e^{-\int_0^s \psi(x+B_r-\tilde
b_1(r))\,dr}
\\
&&\hspace*{76pt} {}- \psi^2 \bigl(x+B_s-\tilde
b_2(s) \bigr) e^{-\int_0^s \psi(x+B_r-\tilde b_2(r))\,dr} \bigr| \bigr] \tilde f(x)\,dx
\\
&&\hspace*{11pt}{}\times\int_{\mathbb{R}} \mathbb{E} \bigl[ \bigl|
\psi_x \bigl(x+B_s-\tilde b_1(s) \bigr)
e^{-\int_0^s \psi(x+B_r-\tilde b_1(r))\,dr} \bigr| \bigr] \tilde f(x)\,dx\,ds
\end{eqnarray*}
and
\begin{eqnarray*}
\mathrm{III} &:=& \frac{1}{2} K^{-2} \int_0^t
\int_{\mathbb{R}} \mathbb{E} \bigl[ \bigl|\psi_{xx}
\bigl(x+B_s-\tilde b_1(s) \bigr) e^{-\int_0^s \psi(x+B_r-\tilde
b_1(r))\,dr} \bigr|\bigr] \tilde f(x)\,dx
\\
&&{}\times\int_{\mathbb{R}} \mathbb{E} \bigl[ \bigl| \psi_x
\bigl(x+B_s-\tilde b_2(s) \bigr) e^{-\int_0^s \psi(x+B_r-\tilde b_2(r))\,dr}
\\
&&\hspace*{39pt} {} - \psi_x \bigl(x+B_s-\tilde
b_1(s) \bigr) e^{-\int_0^s \psi(x+B_r-\tilde b_1(r))\,dr} \bigr| \bigr] \tilde f(x)\,dx\,ds
\\
&&{}+ \frac{1}{2} K^{-2} \int_0^t
\int_{\mathbb{R}} \mathbb{E} \bigl[ \bigl| \psi_{xx}
\bigl(x+B_s-\tilde b_1(s) \bigr) e^{-\int_0^s \psi(x+B_r-\tilde
b_1(r))\,dr}
\\
&&\hspace*{83pt} {}- \psi_{xx} \bigl(x+B_s-\tilde
b_2(s) \bigr) e^{-\int_0^s \psi(x+B_r-\tilde b_2(r))\,dr} \bigr| \bigr] \tilde f(x)\,dx
\\
&&\hspace*{11pt}{}\times\int_{\mathbb{R}} \mathbb{E} \bigl[ \bigl|
\psi_x \bigl(x+B_s-\tilde b_1(s) \bigr) \bigr|
e^{-\int_0^s \psi(x+B_r-\tilde b_1(r))\,dr} \bigr] \tilde f(x)\,dx\,ds.
\end{eqnarray*}

Consider the integrand in $\mathrm{I}$. Note that
\begin{eqnarray*}
&& \bigl| \psi_x \bigl(x+B_s-\tilde b_2(s) \bigr)
e^{-\int_0^s \psi(x+B_r-\tilde b_2(r))\,dr}
\\
&&\quad{} - \psi_x \bigl(x+B_s-\tilde b_1(s)
\bigr) e^{-\int_0^s \psi(x+B_r-\tilde
b_1(r))\,dr} \bigr|
\\
&&\qquad\leq \bigl|\psi_x \bigl(x+B_s-\tilde
b_2(s) \bigr)\bigr|\bigl| e^{-\int_0^s \psi(x+B_r-\tilde b_2(r))\,dr} - e^{-\int_0^s
\psi(x+B_r-\tilde b_1(r))\,dr} \bigr|
\\
&&\qquad\quad{}+ e^{-\int_0^s \psi(x+B_r-\tilde b_1(r))\,dr} \bigl|\psi_x \bigl(x+B_s-
\tilde b_2(s) \bigr) - \psi_x \bigl(x+B_s-
\tilde b_1(s) \bigr) \bigr|
\\
&&\qquad\leq \|\psi_x\|_{L^\infty} s \|\psi_x\|_{L^\infty} \sup_{0 \leq r \leq s} \bigl|b_2(r) -
b_1(r) \bigr| + \|\psi_{xx}\|_{L^\infty} \sup_{0 \leq r \leq s} \bigl|b_2(r) - b_1(r)\bigr|.
\end{eqnarray*}

Similar arguments for the integrands in $\mathrm{II}$ and
$\mathrm{III}$ using the boundedness and global Lipschitz properties of
$\psi$, $\psi_x$ and $\psi_{xx}$ establish that, for a suitable
constant~$C$,
\[
\sup_{0 \leq s \leq t} \bigl|\tilde b_1(s) - \tilde
b_2(s) \bigr| \leq C \int_0^t \sup
_{0 \leq r \leq s} \bigl|\tilde b_1(r) - \tilde b_2(r) \bigr|
\,ds
\]
for $0 \leq t \leq \varepsilon$. It follows from Gr\"onwall's
inequality that $\tilde b_1(t) = \tilde b_2(t)$ for \mbox{$0 \leq t
\leq \varepsilon$}, and so $b_1(t) = b_2(t)$ for $0 \leq t \leq V +
\varepsilon$, contrary to the definition of~$V$ and the assumption that
$V$ is finite.
\end{pf}

%
%
\begin{lemma}[(Global existence)] \label{lglobalexistence}
Define $S$ to be the supremum of the set of~$T$ such that equations
(\ref{e1}), (\ref{e2}) and (\ref{e3}) have a continuous solution on
$[0,T]$. Then $S=+\infty$.
\end{lemma}

\begin{pf}
Suppose to the contrary that $S<+\infty$. From Lemma~\ref{luniqueness},
the equations have a unique solution on $[0,S)$. By time-reversal,
equation (\ref{e1}) is equivalent to
%
%
%
\begin{equation}
\label{e*} G(t) = \int_{\mathbb{R}} \mathbb{E} \biggl[\exp \biggl(-
\int_0^t \psi \bigl(x + B_{t-u} - b(u)
\bigr)\,du \biggr) f(x + B_t) \biggr]\,dx.
\end{equation}
Similarly, (\ref{e2}) is equivalent to
%
%
\begin{eqnarray}
\label{e**} -g(t)&=&\int_{\mathbb{R}} \mathbb{E} \biggl[\exp \biggl(-
\int_0^t \psi \bigl(x + B_{t-u} - b(u)
\bigr)\,du \biggr)
\nonumber
\\[-8pt]
\\[-8pt]
&&\hspace*{71pt} {}\times \psi \bigl(x - b(t) \bigr) f(x + B_t)
\biggr]\,dx.
\nonumber
\end{eqnarray}

For $0 \leq t < S$ put
%
%
%
\begin{equation}
\label{edefineu} u(x,t):= \mathbb{E} \biggl[\exp \biggl(-\int
_0^t \psi \bigl(x + B_{t-u} - b(u)
\bigr)\,du \biggr) f(x + B_t) \biggr].
\end{equation}

Consider $t_1 < t_2 < \cdots\uparrow S$. It follows from the continuity
of the sample paths of~$B$ that as $t_n\uparrow S$
\begin{eqnarray*}
&& \exp \biggl(-\int_0^{t_n} \psi \bigl(x +
B_{t_n-u} - b(u) \bigr)\,du \biggr) f(x + B_{t_n})
\\
&&\qquad\rightarrow\exp \biggl(-\int_0^S \psi
\bigl(x + B_{S-u} - b(u) \bigr)\,du \biggr) f(x + B_S)
\end{eqnarray*}
almost surely for each $x \in\mathbb{R}$, and so
\[
u(x,t_n) \rightarrow\mathbb{E} \biggl[\exp \biggl(-\int
_0^S \psi \bigl(x + B_{S-u} - b(u)
\bigr)\,du \biggr) f(x + B_S) \biggr] =: u(x,S).
\]
Because
\[
u(x,t) \leq \mathbb{E} \bigl[f(x + B_t) \bigr],
\]
it follows from dominated convergence that
\[
\int_{\mathbb{R}} u(x,S)\,dx = \lim_n \int
_{\mathbb{R}} u(x,t_n)\,dx = \lim_n
G(t_n) = G(S).
\]
Also,
\[
\lim_n \int_{\mathbb{R}} \psi \bigl(x -
b(t_n) \bigr) u(x,t_n)\,dx = - \lim
_n g(t_n) = -g(S).
\]
Because $0 < -g(S) < G(S)$ and
\[
u(x,S) \geq e^{-S} \mathbb{E} \bigl[f(x + B_S) \bigr] > 0,
\qquad x \in\mathbb{R},
\]
there is, by Lemma~\ref{lIVP}, a unique $b^* \in\mathbb{R}$ such that
\[
\int_{\mathbb{R}} \psi \bigl(x - b^* \bigr) u(x,S)\,dx = -g(t).
\]

We claim that $b(t_n) \rightarrow b^*$. If this was not the case, then,
by passing to a subsequence we would have $b(t_n)$ converging to
some other extended real $c$ and hence, by dominated convergence,
\begin{eqnarray*}
-g(t)&=& - \lim_n g(t_n)
\\
&=& \lim_n \int_{\mathbb{R}} \psi \bigl(x -
b(t_n) \bigr) u(x,t_n)\,dx
\\
&=& \int_{\mathbb{R}} \psi(x - c) u(x,S)\,dx,
\end{eqnarray*}
contradicting the definition of $b^*$ [where we used the natural
definitions $\psi(-\infty):=1$, $\psi(+\infty):=0$].
Using dominated convergence in (\ref{e3}) we get that there exists a
continuous $b$ such that all three equations hold on $[0,S]$.\vadjust{\goodbreak}

All we need to do now is show that we can extend the existence from
$[0,S]$ to $[0,S+\delta]$ for some $\delta>0$. This amounts to
proving existence on $[0,\delta]$ starting at a different initial
condition---replacing the
original probability density $f$ by the density of the probability measure
\[
\int_{\mathbb{R}} \mathbb{E} \biggl[\exp \biggl(-\int
_0^S \psi \bigl(x + B_u - b(u)
\bigr)\,du \biggr), B_S \in\bullet \biggr]f(x)\,dx / G(S).
\]
This will follow if we can establish the local existence, that is, the
existence for some $\delta>0$, of a solution of the following PDE/ODE system:
\[
\cases{\displaystyle\tilde u_t(x,t) = \frac{1}{2}\tilde
u_{xx}(x,t) - \psi \bigl(x-\tilde{b}(t) \bigr)\tilde u(x,t),\qquad x\in
\mathbb{R}, 0 < t < \delta, \vspace*{2pt}
\cr
\tilde u(x,0) =u(x,S)/G(S),
\hspace*{106pt} x \in\mathbb{R}, \vspace*{2pt}
\cr
\tilde{b}(0) =b(S),
\vspace*{2pt}
\cr
{\fontsize{10.5pt} {12pt}\selectfont{\mbox{$\displaystyle
\tilde{b}'(t) = \frac{(g(S+t)+g'(S+t))/G(S) - \int_{\mathbb{R}}
[\psi^2(x-\tilde{b}(t))-\psi(x-\tilde{b}(t))] \tilde u(x,t)\,dx
}{\int_{\mathbb{R}} \psi_x(x-\tilde{b}(t)) \tilde u(x,t)\,dx}$}}} \vspace*{2pt}
\cr
{
\fontsize{10.5pt} {12pt}\selectfont{\mbox{$\displaystyle\phantom{
\tilde{b}'(t) =} {}-\frac{1/2 \int_{\mathbb{R}} \psi
_x(x-\tilde{b}(t)) \tilde
u_x(x,t)\,dx}{\int_{\mathbb{R}} \psi_x(x-\tilde{b}(t)) \tilde u(x,t)\,dx},\hspace*{36pt} 0<t<
\delta.$}}}}
\]
We note that the expression for $\tilde{b}'(t)$ is not the analogue of
the one for $b'(t)$ that arises immediately from differentiating
(\ref{e3}), which in turn arose from rearranging~(\ref{eg}) and
integrating. However, adding
$0=\int_{\mathbb{R}}\psi(x-b(t))u(x,\break t)\,dx - g(t)$ to the right-hand
side of (\ref{eg}) and then solving for $b'(t)$ leads to an expression
of this form. Note that
\[
u(x,S)=\mathbb{E} \biggl[\exp \biggl(-\int_0^S
\psi \bigl(x + B_{S-u} - b(u) \bigr)\,du \biggr) f(x + B_S)
\biggr] > 0
\]
and, by dominated convergence, that $u(\cdot,S)\in C^2(\mathbb{R})$
with $\|u(\cdot,S)\|_{L^\infty(\mathbb{R})}$,
$\|u_x(\cdot,S)\|_{L^\infty(\mathbb{R})}$,
$\|u_{xx}(\cdot,S)\|_{L^\infty(\mathbb{R})}$ all finite. Therefore, we
can apply Theorem~\ref{texistence} below to get that there is a time
$\delta>0$ and a unique pair $\tilde u,\tilde{b}$ satisfying the
PDE/ODE system above with $\tilde u$ twice continuously differentiable
in $x$ on $\mathbb{R}$ and once continuously differentiable in $t$ on
$[0,\delta]$, that is, $\tilde u\in
C_x^2(\mathbb{R})C_t^1([0,\delta])$, and with $\tilde{b}\in
C^1([0,\delta])$. Thus, we have proven that we have a unique continuous
$b$ satisfying equations (\ref{e1}), (\ref{e2}) and (\ref{e3}) on
$[0,S+\delta]$. This contradicts the maximality of $S$. As a result,
$S=\infty$ and we are done.
\end{pf}

This completes the proof of Theorem~\ref{tglobalEandU}.
\end{pf}

%
%
\begin{remark}\label{rlocal}
Theorem \ref{texistence} below gives local in time existence and
uniqueness of solutions to the system (\ref{ePDE}). However, we require
the global uniqueness result Lemma \ref{luniqueness} because it is not
a priori clear that all the solutions to equations
(\ref{e1})--(\ref{e3}) are solutions to the system (\ref{ePDE}).
\end{remark}

%
%
\begin{remark}\label{rrightderivative}
It follows from the (\ref{e3}), the smoothness assumptions on $G$, the
smoothness assumptions on $\psi$, the smoothness assumptions on $f$ and
the assumption that $f$ is everywhere positive that the function $b$
has a finite right derivative at $0$. In the standard inverse first
passage problem, the analogous property for the boundary often fails
(e.g., when the lifetime distribution is exponential).
\end{remark}

As a corollary we get the global existence and uniqueness of the
PDE/ODE system.

%
%
\begin{corollary}
Suppose that the conditions of Theorem~\ref{tglobalEandU} hold. Then
the system
%
%
%
\begin{equation}
\label{ePDEODE} \qquad\cases{\displaystyle u_t(x,t) =
\frac{1}{2}u_{xx}(x,t) - \psi \bigl(x-b(t) \bigr)u(x,t),
\vspace*{2pt}
\cr
u(x,0) =f(x),\hspace*{140pt} x \in \mathbb{R}, \vspace*{2pt}
\cr
\displaystyle-g(0)= \int_\mathbb{R} \psi \bigl(x-b(0) \bigr)
f(x)\,dx, \vspace*{2pt}
\cr
\displaystyle b'(t) =
\frac{g(t)+g'(t) - \int_{\mathbb{R}} [\psi^2(x-b(t))-\psi
(x-b(t))] u(x,t)\,dx }{\int_{\mathbb{R}} \psi_x(x-b(t))u(x,t)\,dx} \vspace*{2pt}
\cr
\displaystyle\phantom{b'(t) =}
{} -\frac{ 1/2 \int_{\mathbb{R}} \psi_x(x-b(t)) u_x(x,t)\,
dx}{\int_{\mathbb{R}} \psi_x(x-b(t))u(x,t)\,dx},\qquad t > 0,}
\end{equation}
has a unique solution $(u,b)\in C_x^2(\mathbb{R})C^1_t(\mathbb
{R}_+)\times C^1_t(\mathbb{R}_+)$.
\end{corollary}

\section{Local existence and uniqueness}

We now consider the PDE/ODE system (\ref{ePDEODE}). We have already
used the standard notation $F_x$ and $F_{xx}$ to denote the first and
second derivatives of a function $F$ of one variable or the first and
second partial derivatives with respect to the variable $x$ of a
function $F$ of several variables. Because we repeatedly deal with the
function $(x,t) \mapsto\psi(x - b(t))$, it will be convenient to
recycle notation and define a function $\psi_{b}\dvtx\mathbb {R}\times
\mathbb{R}_+ \to\mathbb{R}$ by $\psi_b(x,t) =\psi(x-b(t))$. We will
then set $\psi_{x,b}:=\partial_x\psi_b$ and
$\psi_{xx,b}:=\partial_{xx}\psi_b$. We will continue to use the
notation $\psi_x$ and $\psi_{xx}$ with its old meaning, but there
should be no confusion between the different objects $\psi_b$ and
$\psi_x$. Similarly, we set $\phi:=\psi^2-\psi= -\psi(1 - \psi)$ and
put $\phi_{b}(x,t)=\phi(x-b(t))$. Finally, for two functions $f$, $g$
and fixed $t\geq0$ define $\langle f$,
$g\rangle=\int_{\mathbb{R}}f(x,t)g(x,t)\,dx$.

In the notation we have introduced, we wish to consider the system
%
%
%
\begin{equation}\label{ePDEsmooth}
\qquad\cases{\displaystyle u_t(x,t) =
\frac{1}{2}u_{xx}(x,t) - \psi \bigl(x-b(t) \bigr)u(x,t), &\quad
$x \in\mathbb{R}$, $t > 0$, \vspace*{2pt}
\cr
u(x,0) =f(x), &\quad $x \in
\mathbb{R}$, \vspace*{2pt}
\cr
b(0) = b_0, \vspace*{2pt}
\cr
\displaystyle b'(t) = \frac{g(t) + g'(t) - \langle\phi_b,u\rangle- 1/2 \langle
\psi_{x,b},u_x\rangle}{\langle\psi_{x,b},u\rangle}, &\quad $t > 0$,}
\end{equation}
for some $b_0 \in\mathbb{R}$. [In the proof of
Theorem~\ref{tglobalEandU} we choose $b_0$ to satisfy $-g(0)=
\int_\mathbb{R}\psi(x-b_0) f(x)\,dx$, but we may take an arbitrary
value for $b_0$ and still obtain a local existence and uniqueness
result.]

We have assumed in the statement of Theorem~\ref{tglobalEandU} that
$f\in C^2(\mathbb{R})$ and $\psi\in C^3(\mathbb{R})$ with $\|\psi\|
_{L^\infty}=1$, $\|\psi\|_{L^\infty}=:B$, $\|\psi_{xx}\|_{L^\infty
}=:C$, and $\|\psi_{xxx}\|_{L^\infty}=:F$ for finite constants $B$,
$C$, $F$. Furthermore, we have assumed for some $h>0$ that $\psi(x)=1$
for $x\leq-h$, that $\psi(x)=0$ for $x\geq h$ and that $\psi\geq0$ and
$\psi_x\leq0$ for all $x \in\mathbb{R}$. Set
$\int_{\mathbb{R}}|\psi_x(x)|\,dx=:D$, and note that $0<D<\infty$. It
is immediate that $\|\phi\|_{L^\infty}\leq1$ and $\|\phi_x\|
_{L^\infty} = \|\psi_x(1 - 2 \psi)\|_{L^\infty}
\leq\|\psi_x\|_{L^\infty}=B$. Moreover, the functions $\phi$ and
$\phi_x$ are supported on $[-h,h]$ and
\mbox{$0<\int_{\mathbb{R}}|\phi(x)|\,dx=:E<\infty$}.

%
%
\begin{definition}
For $T>0$, let $(\mathcal{L}^T, \|\cdot\|_T)$
be the Banach space consisting of pairs of functions
$(u,b)$ such that $u\in C^2_x(\mathbb{R})C_t([0,T])$, $b\in C([0,T])$ and
%
%
%
\begin{eqnarray}\label{enorm}
\bigl\| (u,b)\bigr\| _T &:=& \|u\|_{L_x^\infty(\mathbb{R}) L_t^\infty([0,T]) }\nonumber
\\[-1pt]
&&{}+ \|u_x\|_{L_x^\infty(\mathbb{R}) L_t^\infty([0,T])} + \|u_{xx}\|_{L_x^\infty(\mathbb{R}) L_t^\infty([0,T]) }
\nonumber\\[-9pt]\\[-9pt]
&&{}+ \|b\|_{L^\infty([0,T])}\nonumber
\\[-1pt]
&<& \infty.\nonumber
\end{eqnarray}
\end{definition}

%
%
\begin{definition}
\label{DGamma} Given constants $M$, $N$, $P$, $A$, $L > 0$, $b_0
\in\mathbb{R}$ and $T>0$, define the closed subset $\Gamma^T_{\mathit{MNPALb}_0}
\subset\mathcal{L}^T$ by
%
%
\begin{eqnarray}\label{espace}
\Gamma^T_{\mathit{MNPALb}_0}&:=& \Bigl\{ (u,b) \in \mathcal{L}^T\dvtx\nonumber
\\[-1pt]
&&\hspace*{5pt}\|u\|_{L_x^\infty L_t^\infty([0,T])}\leq M,\nonumber
\\[-1pt]
&&\hspace*{5pt}\|u_x\|_{L_t^\infty([0,T]) L_x^\infty}\leq N,\nonumber
\\[-1pt]
&&\hspace*{5pt}\|u_{xx}\|_{L_t^\infty([0,T]) L_x^\infty}\leq P,
\\[-1pt]
&&\hspace*{5pt} b(0) = b_0,\nonumber
\\[-1pt]
&&\hspace*{5pt}\|b\|_{L^\infty([0,T])}\leq A/2,\nonumber
\\[-1pt]
&&\hspace*{5pt}\inf_{x\in[-A,A], t\in[0,T]} u(x,t)\geq L \Bigr\}.\nonumber
\end{eqnarray}
\end{definition}

The following is the main result of this section.

%
%
\begin{theorem}\label{tcontraction}
Suppose that the assumptions of Theorem~\ref{tglobalEandU} hold.
Suppose also that the constants $M$, $N$, $P$, $A$, $L > 0$ and $b_0
\in\mathbb {R}$ are such that:
\begin{itemize}
\item $|b_0| \leq A/4$,\vspace*{2pt}

\item $f(x)\geq4L>0$ for $x\in[-A,A]$,\vspace*{2pt}

\item $\|f\|_{L^\infty(\mathbb{R})}\leq M/2$,\vadjust{\goodbreak}

\item $\|f_x\|_{L^\infty(\mathbb{R})}\leq N/2$,\vspace*{2pt}

\item $\|f_{xx}\|_{L^\infty(\mathbb{R})}\leq P/2$.
\end{itemize}
Then for $T>0$ sufficiently small, there is a contractive map
$\Phi:\Gamma^T_{\mathit{MNPALb}_0}\rightarrow\Gamma^T_{\mathit{MNPALb}_0}$ defined by
$\Phi(v,b) = (u,c)$,
where
%
%
%
\begin{equation}\label{ePDEuv}
\qquad\cases{\displaystyle u_t(x,t) = \frac{1}{2}u_{xx}(x,t)
- \psi \bigl(x-b(t) \bigr)v(x,t), &\quad $x \in\mathbb{R}$, $t>0$, \vspace*{2pt}
\cr
u(x,0) =f(x), &\quad $x \in\mathbb{R}$, \vspace*{2pt}
\cr
\displaystyle
c'(t) = \frac{g(t) + g'(t) - \langle\phi_b,v\rangle- 1/2 \langle
\psi_{x,b},v_x\rangle}{\langle\psi_{x,b},v\rangle}, &\quad $0 < t \leq T$, \vspace*{2pt}
\cr
c(0) = b_0.}
\end{equation}
\end{theorem}

We will prove Theorem~\ref{tcontraction} in a series of lemmas. Each
lemma will assume the hypotheses of Theorem~\ref{tcontraction} and the
bounds established in the previous lemmas.

%
%
\begin{remark}
Since $f$ is continuous and positive, for any $A>0$ there exists $L>0$
such that $f(x)\geq4L$ for $x \in[-A,A]$. Therefore, we are not
restricting the possible values of $b(0)$ by the above assumptions. We
will also assume without loss of generality that $h\leq A/4$.
\end{remark}

%
%
\begin{lemma}[(Boundedness of $u$)]\label{lboundu} Suppose that
$(u,c)=\Phi((v,b))$, with $(v,b) \in\Gamma^T_{\mathit{MNPALb}_0}$.
Then, there exists a time $T>0$ such that
\[
\|u\|_{L_x^\infty L_t^\infty([0,T])}\leq M.
\]
\end{lemma}

\begin{pf}
Using Duhamel's formula [see (\ref{eduhamel})],
\begin{eqnarray*}
\bigl|u(x,t)\bigr| &=& \Biggl|\int_{\mathbb{R}}G(y,t)f(x-y)\,dy
\\
&&\hspace*{2pt}{} -\int_0^t\int_{\mathbb{R}}G(x-y,t-s)
\psi_{c(s)}(y) v(y,s)\,dy\,ds \Biggr|
\\
&\leq& \int_{\mathbb{R}}G(y,t)f(x-y)\,dy
\\
&&{} + \int_0^t \int_{\mathbb {R}}G(x-y,t-s)\bigl|\psi_{c(s)}(y)\bigr| \bigl|v(y,s)\bigr|\,dy\,ds
\\
&\leq& M/2 \int_{\mathbb{R}}G(y,t)\,dy + M \int_0^t
\int_{\mathbb
{R}}G(x-y,t-s)\,dy\,ds
\\
&\leq& M/2+Mt
\\
&\leq& M
\end{eqnarray*}
when $t \leq \frac{1}{2}$, where
\[
G(x,t):=\frac{1}{\sqrt{2\pi t}}e^{-x^2/2t}, \qquad x \in \mathbb{R}, t > 0.
\]\upqed
\end{pf}

%
%
\begin{lemma}[(Boundedness of $u_x$)]\label{lboundux}
Suppose that $(u,c)=\Phi((v,b))$ with $(v,b) \in\Gamma
^T_{\mathit{MNPALb}_0}$. Then there exists a time $T>0$ such that
\[
\|u_x\|_{L_t^\infty([0,T]) L_x^\infty}\leq N.
\]
\end{lemma}

\begin{pf}
Since $u_x$ solves
\[
\cases{\displaystyle \biggl(\partial_t-\frac{\partial_{xx}}{2}
\biggr)u_x =-\psi_{x,c}v-\psi_c v_x,
&\quad $x\in\mathbb{R}$, $t>0$, \vspace*{2pt}
\cr
u_x(x,0) =
f_x(x),}
\]
we have via Duhamel's formula that
\begin{eqnarray*}
\bigl|u_x(x,t)\bigr| &=& \Biggl|\int_{\mathbb{R}}G(y,t)f_x(x-y) \,dy
\\
&&\hspace*{3pt}{}+\int_0^t\int_{\mathbb{R}}G(x-y,t-s)
(-\psi_{x,c} v-\psi_{c}v_x) (y,s)\,dy\,ds \Biggr|
\\
&\leq& \int_{\mathbb{R}}G(y,t)\bigl|f_x(x-y)\bigr|\,dy
\\
&&{} + \int _0^t\int_{\mathbb{R}}G(x-y,t-s)|\psi_{x,c}| \bigl|v(y,s)\bigr|\,dy\,ds
\\
&&{}+ \int_0^t\int_{\mathbb{R}}G(x-y,t-s)\bigl|\psi \bigl(y-c(s) \bigr) \bigr|\bigl|v_x(y,s)\bigr|\,dy\,ds
\\
&\leq& \frac{N}{2} + MB \int_0^t\int_{\mathbb{R}} G(x-y,t-s)\,dy\,ds
\\
&&{}+ N \int_0^t \int_{\mathbb{R}}G(x-y,t-s)\,dy\,ds
\\
&\leq& \frac{N}{2} + MB t + N t.
\end{eqnarray*}
Thus
\[
\|u_x\|_{L_t^\infty([0,T]) L_x^\infty} \leq\frac{N}{2} + (MB+ N) T \leq N,
\]
whenever $T\leq T^\ast$, where
\[
T^\ast=\frac{N}{2(MB+N)}.
\]\upqed
\end{pf}

%
%
\begin{lemma}[(Boundedness of $u_{xx}$)]\label{lbounduxx}
Suppose that $(u,c)=\Phi((v,b))$ with $(v,b) \in\Gamma^T_{\mathit{MNPALb}_0}$.
Then, there exists a time $T>0$ such that
\[
\|u_{xx}\|_{L_t^\infty([0,T]) L_x^\infty}\leq P.
\]
\end{lemma}

\begin{pf}
Noting that $u_{xx}$ solves
\[
\cases{\displaystyle \biggl(\partial_t-\frac{\partial_{xx}}{2}
\biggr)u_{xx} =-\psi_{xx,c}v-2\psi_{x,c}v_x-
\psi_c v_{xx}, &\quad $x\in\mathbb{R}$, $t>0$,
\vspace*{2pt}
\cr
u_{xx}(x,0) = f_{xx}(x),}
\]
analogous manipulations to those from Lemma \ref{lboundux} yield the
result.
\end{pf}
%
%
\begin{lemma}[(Lower bound for $u$ and boundedness of $c'$ and $c$)]\label{plowerbound}
Suppose that $(u,c)=\Phi((v,b))$ with $(v,b) \in\Gamma^T_{\mathit{MNPALb}_0}$.
Then, there exists a time $T>0$ such that
%
%
%
\begin{equation}
\label{euniformbounds} u \geq L\qquad\mbox{on }x\in[-A,A], t\in[0,T]
\end{equation}
and
$c(t)\in[-A/2,A/2]$ for $t\in[0,T]$.
\end{lemma}

\begin{pf}
Recall that $b(0)\in[-A/4,A/4]$.
Then it is immediate that
%
%
%
\begin{eqnarray}\label{elowerbound}
\Biggl|\int_{\mathbb{R}} \psi_x
\bigl(x-b(t) \bigr)v(x,t)\,dx \Biggr|= \Biggl|\int_{\mathbb{R}}
\psi_x(y)v \bigl(y+b(t) \bigr)\,dy \Biggr| \geq DL,
\nonumber\\[-12pt]\\[-8pt]
\eqntext{t\in[0,T],}
\end{eqnarray}
because on the support $[-h,h]$ of $\psi_x$ we have $y\in
[-h,h]\subseteq[-A/4,A/4]$ which together with the bound on $b(t)$
implies $y+b(t)\in[-A,A]$. Therefore, $v(y+b(t))\geq L$ for $t\in
[0,T]$ which, since $\psi_x\leq0$, yields
\[
\int_{\mathbb{R}}\psi_x(y)v \bigl(y+b(t) \bigr)\,dy \leq
L \int_{\mathbb
{R}}\psi_x(y)\,dy = -LD<0, \qquad t
\in[0,T].
\]
We see from these bounds that
\[
\bigl|c'(t)\bigr|\leq\frac{\sup_{[0,t]}(|g+g'|)+ME+ND/2}{LD}
\]
and, by integrating,
\[
\bigl|c(t)\bigr|\leq\bigl|c(0)\bigr| + \frac{\sup_{[0,t]}(|g+g'|)+ME+ND/2}{LD}t.
\]
Thus, there is $T>0$ such that for $t\in[0,T]$,
\[
\bigl|c(t)\bigr|\in[-A/2,A/2].
\]
Using the assumptions, equation (\ref{eduhamel}) gives
\begin{eqnarray*}
u(x,t) &=& \int_{\mathbb{R}}G(y,t)f(x-y)\,dy -\int
_0^t\int_{\mathbb
{R}}G(x-y,t-s)
\psi_{c(s)}(y) v(y,s)\,dy\,ds
\\
&\geq& 4 L \int_{x-A}^{x+A} G(y,t)\,dy - M \int
_0^t\int_{\mathbb
{R}}G(x-y,t-s)\,dy
\,ds
\\
&\geq& 4 L\int_{x-A}^{x+A} G(y,t)\,dy - Mt.
\end{eqnarray*}
If $0\leq x\leq A$, then $x-A\leq0$ and $x+A\geq A>0$, so for small
enough $t$ we have
\[
\int_{x-A}^{x+A} G(y,t)\,dy \geq\int
_{0}^{A} G(y,t)\,dy \geq\frac{1}{3}.
\]
If $ -A \leq x<0$, then $x+A\geq0$ and $x-A\leq-A <0$.
So, for small enough $t$,
\[
\int_{x-A}^{x+A} G(y,t)\,dy \geq\int
_{-A}^{0} G(y,t)\,dy \geq\frac{1}{3}.
\]
Therefore, there exists a time $T>0$ such that whenever $t\in[0,T]$
and $x\in[-A,A]$,
\[
u(x,t) \geq \tfrac{4}{3}L - Mt
\geq L.
\]\upqed
\end{pf}

%
%
\begin{lemma}
For a sufficiently small time $T > 0$, the set $\Gamma^T_{\mathit{MNPALb}_0}$ is
mapped into itself by~$\Phi$.
\end{lemma}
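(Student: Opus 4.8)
The plan is simply to assemble the four preceding lemmas. Lemma~\ref{l_bound_u} supplies a threshold time below which $\|u\|_{L_x^\infty L_t^\infty([0,T])}\le M$; Lemma~\ref{l_bound_ux} supplies one below which $\|u_x\|_{L_t^\infty([0,T]) L_x^\infty}\le N$; Lemma~\ref{l_bound_uxx} supplies one below which $\|u_{xx}\|_{L_t^\infty([0,T]) L_x^\infty}\le P$; and Lemma~\ref{p_lowerbound} supplies one below which $\inf_{x\in[-A,A],\,t\in[0,T]}u(x,t)\ge L$ and $\|c\|_{L^\infty([0,T])}\le A/2$. Each of these thresholds depends only on the fixed data $M,N,P,A,L,b_0$ and on $\psi,f,G$, and not on the particular pair $(v,b)\in\Gamma^T_{MNPALb_0}$ fed to $\Phi$. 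Taking $T$ to be the minimum of these finitely many positive numbers, every defining inequality of $\Gamma^T_{MNPALb_0}$ in \eqref{e_space} holds simultaneously for $(u,c)=\Phi((v,b))$.

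It then remains only to check that $(u,c)$ genuinely lies in the ambient Banach space $\mathcal{L}^T$ and that $c(0)=b_0$. The equality $c(0)=b_0$ is built into the definition \eqref{e_PDE_uv} of $\Phi$, and $|b_0|\le A/4\le A/2$ by hypothesis, so the constraint $b(0)=b_0$ in \eqref{e_space} is satisfied. The function $t\mapsto c(t)$ is continuous because $c(t)=b_0+\int_0^t c'(s)\,ds$ where $c'$ is continuous: its numerator $g(s)+g'(s)-\la\phi_b,v\ra-1/2\,\la\psi_{x,b},v_x\ra$ is continuous in $s$ (using $g,g'\in C$, $v,v_x\in C_t$, and that $\psi,\phi$ are bounded with compact support), and its denominator $\la\psi_{x,b},v\ra$ is continuous and, by \eqref{e_lowerbound}, bounded away from $0$ by $DL$. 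For $u$, Duhamel's formula \eqref{e_duhamel} writes $u(x,t)=\int_{\mathbb{R}}G(y,t)f(x-y)\,dy-\int_0^t\int_{\mathbb{R}}G(x-y,t-s)\psi_{c(s)}(y)v(y,s)\,dy\,ds$; since $G(\cdot,t)$ is smooth for $t>0$, $f\in C^2(\R)$, and the source term is bounded, one differentiates under the integral sign twice in $x$ and invokes the standard heat-kernel estimates to get $u\in C^2_x(\R)C_t([0,T])$ with $u$, $u_x$, and $u_{xx}$ continuous in $(x,t)$. Together with the sup-norm bounds already obtained this gives $\|(u,c)\|_T\le M+N+P+A/2<\infty$, so $(u,c)\in\mathcal{L}^T$ and hence $(u,c)\in\Gamma^T_{MNPALb_0}$.

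There is no real obstacle here: the statement is a bookkeeping consolidation of the four boundedness lemmas. The only mild point requiring care is confirming that the output pair carries the regularity demanded by membership in $\mathcal{L}^T$, and that is routine parabolic analysis of Duhamel's formula.
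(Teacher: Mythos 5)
Your proof is correct and follows the same route as the paper: take $T$ to be the minimum of the thresholds from Lemmas~\ref{l_bound_u}--\ref{p_lowerbound}, then verify that $c$ is well-defined and $C^1$ because the denominator $\la\psi_{x,b},v\ra$ is bounded away from zero by $DL$, and that $u$ acquires the needed regularity from Duhamel's formula. Your write-up is somewhat more explicit about the continuity and regularity checks, but the substance is the same bookkeeping consolidation the paper gives.
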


\begin{pf}
The above lemmas provided the necessary bounds. Now, note that if we
start with $(v,b) \in\Gamma^T_{\mathit{MNPALb}_0}$, then we first get the
function $c$ from the last two equations in (\ref{ePDEuv}) by simply
integrating. The integration is well defined because the denominator is
bounded in absolute value below by $DL>0$ and the numerator is bounded
above. Thus $c\in C^1([0,t])$. Next, having $c$ in hand we get the
function $u$ from the first two equations of (\ref{ePDEuv}). We note
that, by Duhamel's formula, the function $u$ has actually more than the
desired smoothness, namely, $u\in C^2_x(\mathbb{R})C^1_t([0,T])$.
\end{pf}

%
%
\begin{lemma}\label{lcontractionb}
Suppose that $(v_1,b_1), (v_2,b_2)\in\Gamma^T_{\mathit{MNPALb}_0}$. Set
$(u_1,c_1)=\Phi((v_1,b_1))$ and $(u_2,c_2)=\Phi((v_2,b_2))$. For any
$\varepsilon>0$ there exists $T>0$ such that
%
%
%
\begin{equation}
\label{econtractionb}
\|c_2-c_1\|_{L_t^\infty([0,T])} \leq\varepsilon\bigl\|(v_2, b_2) - (v_1,b_1) \bigr\|_T.
\end{equation}
\end{lemma}

\begin{pf}
Note that the functions $c_1$, $c_2$ satisfy
%
\begin{equation}
\label{ebc} \cases{\displaystyle c_1'(t) =
\frac{g(t)+g'(t) - \langle\phi_{b_1},v_1\rangle- 1/2
\langle\psi_{x,b_1},\partial_xv_1\rangle}{\int_{\mathbb{R}}
\langle\psi_{x,b_1},v_1\rangle}, &\quad $t>0$, \vspace*{2pt}
\cr
\displaystyle
c_2'(t) = \frac{g(t)+g'(t) - \langle\phi_{b_2},v_2\rangle- 1/2
\langle\psi_{x,b_2},\partial_xv_2\rangle}{\int_{\mathbb{R}}
\langle\psi_{x,b_2},v_2\rangle}, &\quad$t>0$.}
\end{equation}
Subtracting the two equations gives
\begin{eqnarray*}
&& c_2'(t)-c_1'(t)
\\
&&\qquad =
\bigl[g(t)+g'(t) \bigr] \biggl(\frac{\langle\psi
_{x,b_1},v_1\rangle-\langle\psi_{x,b_1},v_2\rangle}{\langle\psi
_{x,b_1},v_1\rangle\langle\psi_{x,b_2},v_2\rangle} +
\frac{\langle\psi_{x,b_1},v_2\rangle-\langle\psi
_{x,b_2},v_2\rangle}{\langle\psi_{x,b_1},v_1\rangle\langle\psi
_{x,b_2},v_2\rangle} \biggr)
\\
&&\quad\qquad{}+ \frac{ (\langle\phi_{b_1},v_1\rangle-\langle\phi
_{b_2},v_1\rangle)\langle\psi_{x,b_2},v_2\rangle}{\langle
\psi_{x,b_1},v_1\rangle\langle\psi_{x,b_2},v_2\rangle} + \frac{
(\langle\phi_{b_2},v_1\rangle-\langle\phi
_{b_2},v_2\rangle)\langle\psi_{x,b_2},v_2\rangle}{\langle
\psi_{x,b_1},v_1\rangle\langle\psi_{x,b_2},v_2\rangle}
\\
&&\quad\qquad{}+ \frac{ (\langle\phi_{b_2},v_2\rangle-\langle\phi
_{b_1},v_2\rangle)\langle\phi_{b_2},v_2\rangle}{\langle\psi
_{x,b_1},v_1\rangle\langle\psi_{x,b_2},v_2\rangle} + \frac{
(\langle\phi_{b_1},v_2\rangle-\langle\phi
_{b_1},v_1\rangle)\langle\phi_{b_2},v_2\rangle}{\langle\psi
_{x,b_1},v_1\rangle\langle\psi_{x,b_2},v_2\rangle}
\\
&&\quad\qquad{}+ \frac{ (\langle\psi_{x,b_1},\partial_xv_1\rangle
-\langle\psi_{x,b_2},\partial_x v_1\rangle)\langle\psi
_{x,b_2},v_2\rangle}{2\langle\psi_{x,b_1},v_1\rangle\langle\psi
_{x,b_2},v_2\rangle}
\\
&&\quad\qquad{}+ \frac{ (\langle\psi_{x,b_2},\partial_xv_1\rangle
-\langle\psi_{x,b_2},\partial_xv_2\rangle)\langle\psi
_{x,b_2},v_2\rangle}{2\langle\psi_{x,b_1},v_1\rangle\langle\psi
_{x,b_2},v_2\rangle}
\\
&&\quad\qquad{}+ \frac{ (\langle\psi_{x,b_2},v_2\rangle-\langle\psi
_{x,b_1},v_2\rangle)\langle\psi_{x,b_2},\partial_xv_2\rangle
}{2\langle\psi_{x,b_1},v_1\rangle\langle\psi_{x,b_2},v_2\rangle}
\\
&&\quad\qquad{}+\frac{ (\langle\psi_{x,b_1},v_2\rangle-\langle\psi
_{x,b_1},v_1\rangle)\langle\psi_{x,b_2},\partial_xv_2\rangle
}{2\langle\psi_{x,b_1},v_1\rangle\langle\psi_{x,b_2},v_2\rangle}.
\end{eqnarray*}
Using the fact that the functions $\psi$, $\psi_x$ and $\phi$ are
Lipschitz, that $v_1$ and $v_2$ are bounded, and that their first
derivatives are bounded,
we find that
\begin{eqnarray*}
\bigl\|c_2'-c_1'\bigr\|_{L_t^\infty([0,T])}
&\leq& \frac{\sup_{[0,T]}|g+g'|\|
v_1-v_2\|_{L_x^\infty L_t^\infty([0,T])}}{L^2D^2}
\\
&&{}+ \frac{\sup_{[0,T]}|g+g'|MC(A+2h)\|b_2-b_1\|_{L_t^\infty
([0,T])}}{L^2D^2}
\\
&&{}+\frac{DM^2B(A+2h)\|b_2-b_1\|_{L_t^\infty([0,T])}}{L^2D^2}
\\
&&{}+\frac{DME\|v_2-v_1\|_{L_x^\infty L_t^\infty([0,T])}}{L^2D^2}
\\
&&{}+ \frac{EM^2B(A+2h)\|b_2-b_1\|_{L_t^\infty([0,T])}}{L^2D^2}
\\
&&{}+\frac{ME^2\|v_2-v_1\|_{L_x^\infty L_t^\infty([0,T])}}{L^2D^2}
\\
&&{}+ \frac{NMDC(A+2h)\|b_2-b_1\|_{L_t^\infty([0,T])}}{2L^2D^2}
\\
&&{}+ \frac{MD^2\|\partial_x v_2-\partial_x v_1\|_{ L_x^\infty
L_t^\infty([0,T])}}{2L^2D^2}
\\
&&{}+ \frac{NMDC(A+2h)\|b_2-b_1\|_{L_t^\infty([0,T])}}{2L^2D^2}
\\
&&{}+\frac{ND^2\|v_2-v_1\|_{L_x^\infty L_t^\infty([0,T])}}{2L^2D^2}.
\end{eqnarray*}
Integrating and recalling that $c_1(0)=c_2(0)=b_0$ leads to
\begin{eqnarray*}
\biggl|\int_0^t \bigl(c_2'(s)-c_1'(s)
\bigr)\,ds \biggr|&=& \bigl|c_2(t)-c_1(t) -
\bigl(c_2(0)-c_1(0) \bigr)\bigr|
\\
&\leq& \int_0^t \bigl|c_2'(s)-c_1'(s)\bigr|\,ds
\\
&\leq& t\bigl\|c_2'-c_1'\bigr\|_{L_t^\infty([0,t])}.
\end{eqnarray*}
Hence,
\[
\|c_2-c_1\|_{L_t^\infty([0,T])} \leq T \bigl\|c_2'-c_1'
\bigr\|_{L_t^\infty([0,T])}
\]
and by the above bound on $\|c_2'-c_1'\|_{L_t^\infty([0,T])}$ for any
$\varepsilon> 0$ we can choose $T$ small enough that
\[
\|c_2-c_1\|_{L_t^\infty([0,T])} \leq \varepsilon
\bigl\|(v_2, b_2) - (v_1, b_1)
\bigr\|_T.
\]\upqed
\end{pf}
%

%
%
\begin{lemma}\label{lcontriactionu}
Suppose that $(v_1,b_1), (v_2,b_2)\in\Gamma^T_{\mathit{MNPALb}_0}$. Set
$(u_1,c_1)=\Phi((v_1,b_1))$ and $(u_2,c_2)=\Phi((v_2,b_2))$. For any
$\varepsilon>0$ there exists $T>0$ such that
%
%
%
\begin{equation}
\label{econtractionu} \|u_2-u_1\|_{L_x^\infty L_t^\infty([0,T])} \leq
\varepsilon\bigl\|(v_2, b_2) - (v_1,b_1) \bigr\|_T.
\end{equation}
\end{lemma}

\begin{pf}
The following equations hold:
%
%
%
\begin{equation}
\label{eubv} \cases{\displaystyle \biggl(\partial_t-
\frac{\partial_{xx}}{2} \biggr)u_1 =-\psi \bigl(x-c_1(t)
\bigr)v_1, &\quad$x\in\mathbb{R}$, $t>0$, \vspace*{4pt}
\cr
\displaystyle\biggl(\partial_t-\frac{\partial_{xx}}{2} \biggr)u_2 =-\psi
\bigl(x-c_2(t) \bigr)v_2, &\quad$x\in\mathbb{R}$, $t>0$,
\vspace*{4pt}
\cr
u_1(x,0) = f(x), &\quad$x \in\mathbb{R}$,
\vspace*{3pt}
\cr
u_2(x,0) = f(x), &\quad$x \in\mathbb{R}$.}
\end{equation}
By Duhamel's formula we have
%
%
%
\begin{equation}
\label{eduhamel1} u_1 = G\ast(f\delta_{t=0}) + G\ast(-
\psi_{c_1}v_1)
\end{equation}
and
%
%
%
\begin{equation}
\label{eduhamel2} u_2 = G\ast(f\delta_{t=0}) + G\ast(-
\psi_{c_2}v_2),
\end{equation}
where we recall that $\ast$ denotes convolution on $\mathbb{R}_+
\times\mathbb{R}$. Subtracting the two equations gives
\[
u_1-u_2 = G\ast \bigl((\psi_{c_2}-
\psi_{c_1})v_1 + \psi_{c_2}(v_2-v_1)
\bigr).
\]
Bounding in terms of the $\sup$ norm and using the fact that
\[
\bigl|\psi \bigl(x-c_1(t) \bigr)-\psi \bigl(x-c_2(t) \bigr)\bigr|
\leq \|\psi_x\|_{L_x^\infty}\bigl|c_1(t)-c_2(t)\bigr|,
\]
we have
\begin{eqnarray*}
&& \bigl|u_1(x,t)-u_2(x,t)\bigr|
\\
&&\qquad \leq \int_0^t
\int_{\mathbb{R}}G(x-y,t-s)\bigl|\psi_{c_1}(y,s)-
\psi_{c_2}(y,s)\bigr|\bigl|v_1(y,s)\bigr|\,dy\,ds
\\
&&\quad\qquad{}+ \int_0^t\int_{\mathbb{R}}G(x-y,t-s)\bigl|
\psi_{c_2}(y,s)\bigr|\bigl|v_2(y,s)-v_1(y,s)\bigr|\,dy\,ds
\\
&&\qquad \leq\|\psi_x\|_{L_x^\infty}\|v_1\|_{L^\infty L_t^\infty([0,T])}
\| c_1-c_2\|_{L_x^\infty} t
+ \|\psi\|_{L_x^\infty} \|v_1-v_2
\|_{L_x^\infty L_t^\infty
([0,t])} t
\\
&&\qquad = BM\|c_1-c_2\|_{L_x^\infty} t +
\|v_1-v_2\|_{L_x^\infty L_t^\infty
([0,t])} t.
\end{eqnarray*}
Thus,
\[
\|u_1-u_2\|_{L_x^\infty L_t^\infty([0,T])} \leq B\|c_1-c_2
\| _{L_x^\infty} T + \|v_1-v_2\|_{L_x^\infty L_t^\infty([0,T])} T,
\]
so for small enough $T$ we see that (\ref{econtractionu}) holds.
\end{pf}

%
%
\begin{lemma}\label{lcontriactionux}
Suppose that $(v_1,b_1), (v_2,b_2)\in\Gamma^T_{\mathit{MNPALb}_0}$. Set
$(u_1,c_1)=\Phi((v_1,b_1))$ and $(u_2,c_2)=\Phi((v_2,b_2))$. For any
$\varepsilon>0$ there exists $T>0$ such that
%
%
%
\begin{equation}
\label{econtractionux} \|\partial_xu_1-
\partial_xu_2\|_{L_x^\infty L_t^\infty([0,T])}\leq\varepsilon
\bigl\|(v_2, b_2) - (v_1, b_1)\bigr\|_T.
\end{equation}
\end{lemma}

\begin{pf}
Differentiating (\ref{eubv}) with respect to $x$,
%
\begin{equation}\label{euxbvx}
\cases{\displaystyle \biggl(\partial_t-
\frac{\partial_{xx}}{2} \biggr)\,\partial_xu_1(x,t) =-
\psi_{x,c_1}(x,t)v_1(x,t)-\psi_{c_1}(x,t)\,\partial_xv_1(x,t),
\vspace*{3pt}\cr \displaystyle
\hspace*{112pt} x\in\mathbb{R}, t>0,
\vspace*{3pt}\cr \displaystyle
\biggl(\partial_t-\frac{\partial_{xx}}{2} \biggr)\,\partial_xu_2(x,t)
=-\psi_{x,c_2}(x,t)v_2(x,t)- \psi_{c_2}(x,t)\,\partial_xv_2(x,t),
\vspace*{3pt}\cr \displaystyle
\hspace*{112pt}x\in \mathbb{R}, t>0,
\vspace*{3pt} \cr
\partial_xu_1(x,0) = f_x(x),\qquad x \in\mathbb{R},
\vspace*{3pt}\cr
\partial_xu_2(x,0) = f_x(x),\qquad x \in \mathbb{R}.}\hspace*{-35pt}
\end{equation}
Via Duhamel's formula,
%
\begin{eqnarray}\label{eduhamelx1}
\partial_x u_1 &=& G
\ast(f_x \delta_{t=0})
\nonumber\\[-8pt]\\[-8pt]
&&{} + G\ast \bigl(-\psi_x
\bigl(\cdot -c_1(\cdot) \bigr)v_1-\psi \bigl(
\cdot-c_2( \cdot) \bigr)\,\partial_xv_1 \bigr)\nonumber
\end{eqnarray}
and
%
%
%
\begin{eqnarray}\label{eduhamelx2}
\partial_x u_1 &=& G
\ast(f_x \delta_{t=0})
\nonumber\\[-8pt]\\[-8pt]
&&{} + G\ast \bigl(-\psi_x
\bigl(\cdot -c_2(\cdot) \bigr)v_2-\psi \bigl(
\cdot-c_2( \cdot) \bigr)\,\partial_xv_2 \bigr).\nonumber
\end{eqnarray}
Subtracting and rearranging,
\begin{eqnarray*}
&& (\partial_x u_1 - \partial_x u_2) (x,t)
\\
&&\qquad = \int_0^t\int_{\mathbb
{R}}G(x-y,t-s) \bigl[\psi_{x,c_2}v_2(y,s)-
\psi_{x,c_1}v_1(y,s) \bigr]\,dy\,ds
\\
&&\qquad\quad{}+\int_0^t\int_{\mathbb{R}}G(x-y,t-s)
\bigl[\psi_{c_2}\,\partial_xv_2(y,s)-
\psi_{c_1}\,\partial_xv_1(y,s) \bigr]\,dy\,ds
\\
&&\qquad = \int_0^t\int_{\mathbb{R}}G(x-y,t-s)
\bigl[\psi_{x,c_2}v_2(y,s) - \psi_{x,c_2}v_1(y,s)
\bigr]\,dy\,ds
\\
&&\quad\qquad{}+ \int_0^t\int_{\mathbb{R}}G(x-y,t-s)
\bigl[\psi_{x,c_2}v_1(y,s) - \psi_{x,c_1}v_1(y,s)
\bigr]\,dy\,ds
\\
&&\quad\qquad{}+\int_0^t\int_{\mathbb{R}}G(x-y,t-s)
\bigl[\psi_{c_2}\,\partial_xv_2(y,s)-
\psi_{c_2}\,\partial_xv_1(y,s) \bigr]\,dy\,ds
\\
&&\quad\qquad{}+\int_0^t\int_{\mathbb{R}}G(x-y,t-s)
\bigl[\psi_{c_2}\,\partial_xv_1(y,s) -
\psi_{c_1}\,\partial_xv_1(y,s) \bigr]\,dy\,ds.
\end{eqnarray*}
Using estimates similar to those in the proof of Lemma
\ref{lcontriactionu},
\begin{eqnarray*}
&& \|\partial_xu_1-\partial_xu_2
\|_{L_x^\infty L_t^\infty([0,T])}
\\
&&\qquad \leq BM\|v_2-v_1\|_{L_x^\infty
L_t^\infty([0,T])}T
+ CM\|c_2-c_1\|_{L_t^\infty([0,T])} T
\\
&&\quad\qquad{}+\|\partial_xv_2-\partial_xv_1
\|_{L_x^\infty L_t^\infty([0,T])} T
+BN\|c_2-c_1\|_{L_t^\infty([0,T])}T
\\
&&\qquad = BMT\|v_2-v_1\|_{L_x^\infty L_t^\infty([0,T])}
+(CM+BN)T\|c_2-c_1\|_{L_t^\infty([0,T])}
\\
&&\quad\qquad{}+ T\|\partial_xv_2-\partial_xv_1
\|_{L_x^\infty L_t^\infty([0,T])},
\end{eqnarray*}
so for $T$ small we recover (\ref{econtractionux}).
\end{pf}

%
%
\begin{lemma}\label{lcontriactionuxx}
Suppose that $(v_1,b_1)$, $(v_2,b_2)\in\Gamma^T_{\mathit{MNPALb}_0}$. Set
$(u_1,c_1)=\Phi((v_1,b_1))$ and $(u_2,c_2)=\Phi((v_2,b_2))$. For any
$\varepsilon>0$ there exists $T>0$ such that
%
%
%
\begin{equation}
\label{econtractionuxx}
\|\partial_{xx}u_1-\partial_{xx}u_2\|_{L_x^\infty L_t^\infty
([0,T])}\leq \varepsilon
\bigl\|(v_2, b_2) - (v_1, b_1)\bigr\|_T.
\end{equation}
\end{lemma}

\begin{pf}
Differentiating (\ref{eubv}) twice with respect to $x$,
%
%
%
\begin{equation}\label{euxxbvxx}
\cases{\displaystyle \biggl(\partial_t-
\frac{\partial_{xx}}{2} \biggr)\,\partial_{xx}u_1
=- \psi_{xx,c_1}v_1-2\psi_{x,c_1}\,\partial_xv_1-
\psi_{c_1}\,\partial_{xx}v_1,
\vspace*{2pt}\cr
\hspace*{121pt} x\in \mathbb{R}, t>0,
 \vspace*{2pt}\cr
\displaystyle \biggl(\partial_t-
\frac{\partial_{xx}}{2} \biggr)\,\partial_{xx}u_2
=- \psi_{xx,c_2}v_2-2\psi_{x,c_2}\,\partial_xv_2-
\psi_{c_2}\,\partial_{xx}v_2,
\vspace*{2pt}\cr
\hspace*{121pt} x\in \mathbb{R}, t>0,
\vspace*{2pt} \cr
\partial_{xx}u_1(x,0)
= f_{xx}(x),\qquad x \in\mathbb{R},
\vspace*{2pt}\cr
\partial_{xx}u_2(x,0) = f_{xx}(x),\qquad x \in \mathbb{R}.}
\end{equation}
Duhamel's formula and similar manipulations to Lemmas
\ref{lcontriactionu} and \ref{lcontriactionux} give
\begin{eqnarray*}
&& \|\partial_{xx}u_1-\partial_{xx}u_2
\|_{L_x^\infty L_t^\infty([0,T])}
\\
&&\qquad \leq CM\|v_2-v_1\|_{L_t^\infty([0,T]) L_x^\infty}T
\\
&&\quad\qquad{}+FM\|c_2-c_1\|_{L_t^\infty([0,T])} T
\\
&&\quad\qquad{}+2B\|\partial_xv_2-\partial_xv_1
\|_{L_x^\infty L_t^\infty([0,T])
} T
\\
&&\quad\qquad{}+2CN\|c_2-c_1\|_{L_t^\infty([0,T])}T
\\
&&\quad\qquad{}+\|\partial_{xx}v_2-\partial_{xx}v_1
\|_{L_x^\infty L_t^\infty
([0,T]) } T
\\
&&\quad\qquad{}+BP\|c_2-c_1\|_{L_t^\infty([0,T])}T
\\
&&\qquad = CMT\|v_2-v_1\|_{L_x^\infty L_t^\infty([0,T]) }
\\
&&\quad\qquad{}+ 2BT\|\partial_xv_2-\partial_xv_1
\|_{L_x^\infty L_t^\infty
([0,T])}
\\
&&\quad\qquad{}+ T\|\partial_{xx}v_2-\partial_{xx}v_1
\|_{L_x^\infty L_t^\infty
([0,T])}
\\
&&\quad\qquad{}+(FM+2CN+BP)T\|c_2-c_1\|_{L_t^\infty([0,T])},
\end{eqnarray*}
so when $T>0$ is small, (\ref{econtractionuxx}) holds.
\end{pf}

%
%
\begin{theorem}[(Local existence and uniqueness)]
\label{texistence} Suppose that the conditions of
Theorem~\ref{tglobalEandU} hold. Then, there exists a time $T>0$ such
that the system
\[
\cases{\displaystyle u_t(x,t) = \frac{1}{2}u_{xx}(x,t)
- \psi \bigl(x-b(t) \bigr)u(x,t), &\quad$x \in\mathbb{R}$, $t > 0$, \vspace*{3pt}
\cr
u(x,0) =f(x), &\quad$x \in\mathbb{R}$, \vspace*{3pt}
\cr
\displaystyle
b'(t) = \frac{g(t) + g'(t) - \langle\phi_b,u\rangle- 1/2 \langle
\psi_{x,b},u_x\rangle}{\langle\psi_{x,b},u\rangle}, &\quad$t > 0$, \vspace*{3pt}
\cr
b(0)=b_0,}
\]
has a unique solution $(u,b)\in C_x^2(\mathbb{R})C^1_t([0,T])\times
C^1([0,T])$.
\end{theorem}

\begin{pf}
Note there exist strictly positive constants $A, M, N$ and $P$ such
that $b_0\in[-\frac{A}{4},\frac{A}{4} ]$, $f(x)\geq L>0$, when
$x\in[-A,A]$, $\|f\|_{L^\infty(\mathbb{R})}\leq M$,\break  $\|f_x\|
_{L^\infty(\mathbb{R})}\leq N/2$, and
$\|f_x\|_{L^\infty(\mathbb{R})}\leq P/2$. Putting all the estimates
from the above lemmas together we have that, if $0<\varepsilon<1$ is
fixed, then for $T>0$ small enough,
\[
\bigl\|(u_2,c_2)-(u_1,c_1)\bigr\|\leq \varepsilon\bigl\|(v_2,b_2)-(v_1,b_1)
\bigr\|.
\]
Thus there exists a $T>0$ such that the map $\Phi\dvtx\Gamma
^T_{\mathit{MNPALb}_0}\rightarrow\Gamma^T_{\mathit{MNPALb}_0}$ is a contraction. Since
$\Gamma^T_{\mathit{MNPALb}_0}$ is a closed subset of the Banach space
$\mathcal{L}^T$, the contraction mapping theorem gives that there
exists a unique fixed point, that is, a pair $(u,b)\in
C_x^2(\mathbb{R})C_t([0,T])\times C([0,T])$ with $b(0)=b_0$ such that
%
%
%
\begin{equation}\label{e_ub'}
\cases{\displaystyle u_t(x,t) = \frac{1}{2}u_{xx}(x,t)
- \psi \bigl(x-b(t) \bigr)u(x,t), \vspace*{2pt}
\cr
u(x,0) =f(x), \vspace*{2pt}
\cr
\displaystyle b'(t) = \frac{g(t) + g'(t) - \langle\phi_{b},u\rangle
- 1/2 \langle
\psi_{x,b},u_x\rangle}{\langle\psi_{x,b},u\rangle}, \vspace*{2pt}
\cr
b(0)=b_0.}
\end{equation}

We can now argue that our fixed point $(u,b)$ has more smoothness than
it seems a priori. The third equation in (\ref{e_ub'}) implies that $b$
must be continuously differentiable with a bounded derivative. This,
together with the first equation from (\ref{e_ub'}) then tells us that
$u$ has a continuous derivative in time. Therefore, we must have
$(u,b)\in C_x^2(\mathbb{R})C^1_t([0,T])\times C^1([0,T])$.
\end{pf}

%
%
\begin{corollary} Assume the hypotheses of
Theorem \ref{texistence} and the extra conditions
%
%
%
\begin{equation}
\label{econditions} \cases{\displaystyle G(0) = \int_{\mathbb{R}}
f(x)\,dx, \vspace*{2pt}
\cr
\displaystyle-g(0) =\int_{\mathbb{R}}\psi
\bigl(x-b(0) \bigr)f(x)\,dx, \vspace*{2pt}
\cr
0<-g(t) < G(t), &\quad$t
\in[0,T]$.}
\end{equation}
Then, there exists a time $T>0$ such that the system
\[
\cases{\displaystyle u_t(x,t) = \frac{1}{2}u_{xx}(x,t)
- \psi \bigl(x-b(t) \bigr)u(x,t), &\quad$x \in\mathbb{R}$, $0<t<T$, \vspace*{2pt}
\cr
u(x,0) =f(x), &\quad$x \in\mathbb{R}$, \vspace*{2pt}
\cr
\displaystyle G(t) =
\int_{\mathbb{R}} u(x,t)\,dx, &\quad$t\in[0,T]$,}
\]
has a unique solution $(u,b)\dvtx\mathbb{R}\times[0,T] \to\mathbb{R}$.
Furthermore, $u\in C_x^2(\mathbb{R})C^1_t([0,T])$ and $b\in
C^1([0,T])$.
\end{corollary}

\begin{pf}
First note that by Lemma \ref{lIVP} we have that $b(0)$ is uniquely
determined. From Theorem \ref{texistence} we have that there exist
unique $u$, $b$\vadjust{\goodbreak} with $u\in C_x^2(\mathbb{R})C^1_t([0,T])$ and $b\in
C^1([0,T])$ satisfying the PDE and having everywhere in $[0,T]$
\[
b'(t) = \frac{g(t) + g'(t) - \langle\phi_b,u\rangle- 1/2 \langle
\psi_{x,b},u_x\rangle}{\langle\psi_{x,b},u\rangle}.
\\
\]

Set $F(t):=G(t)-\int_{\mathbb{R}} u(x,t)\,dx$ and note that the first
two conditions from~(\ref{econditions}) yield, together with the
PDE, $F_t(0)=F(0)=0$. The function $F$ belongs to $C^1([0,T])$, and
$F_t$ belongs to $C([0,T])$. The above equation for $b'$ is equivalent,
after using the PDE, to
\[
F_{tt}(t) - F_t(t)=0, \qquad t\in[0,T].
\]
Integrating and using the fundamental theorem of calculus, we get
\[
F_t(t) - F(t) = F_t(0)-F(0)=0, \qquad t\in[0,T].
\]
The unique solution to this differential equation is $F(t)=Ce^t$ for
some constant $C\in\mathbb{R}$. This together with $F(0)=0$ yields
$F(t)=0$ for $t\in[0,T]$. Thus
\[
G(t) = \int_{\mathbb{R}} u(x,t)\,dx, \qquad t\in[0,T].
\]
Then, taking a derivative and using the PDE,
\[
-g(t)=\int_{\mathbb{R}}\psi \bigl(x-b(t) \bigr)u(x,t)\,dx, \qquad t
\in[0,T].
\]
Because $|\psi(x)|\leq1$ for $x\in\mathbb{R}$, $\psi=0$ for $x\geq
h$ and $u(x,t)> 0$, we see that
\[
0<\int_{\mathbb{R}}\psi \bigl(x-b(t) \bigr)u(x,t)\,dx =-g(t)< \int
_{\mathbb
{R}}u(x,t)\,dx = G(t).
\]\upqed
\end{pf}

\section{Discontinuous killing}

Next, we consider
the existence of a barrier when killing is done nonsmoothly.
That is, we ask whether there exists a function $b$ such that, for a
given $G$,
%
%
%
\begin{equation}
\label{ehardbarrier} G(t) = \int_{\mathbb{R}} \mathbb{E} \biggl[\exp
\biggl(-\int_0^t \mathbf{1}_{(-\infty,0]}
\bigl(x + B_u - b(u) \bigr)\,du \biggr) f(x) \biggr]\,dx.
\end{equation}
Note that $\int_0^t \mathbf{1}_{(-\infty,0]}(x + B_u - b(u))\,du$ is
the time during the interval $[0,t]$ spent by a Brownian motion started
at $x$ below the barrier $b$.

%
%
\begin{theorem}
\label{Thardbarrier} There exists a function $b$ such that, for a
given, twice continuously differentiable $G$ satisfying
$0<-g(t)/G(t)<1$, $t\geq0$, equation (\ref{ehardbarrier}) holds for
all $t\geq0$.
\end{theorem}

\begin{pf}
Let $\phi$ be a smooth decreasing function supported on $[0,1]$ with
$\int_{\mathbb{R}} \phi(x)\,dx = 1$. Put
\[
\underline{\psi}_\varepsilon(x) = \int_x^\infty
\phi \bigl((y - \varepsilon)/\varepsilon \bigr) (1/\varepsilon)\,dy
\]
and
\[
\overline{\psi}_\varepsilon(x) = \int_x^\infty
\phi(y/\varepsilon) (1/\varepsilon)\,dy,
\]
so that
%
%
%
\begin{equation}
\label{e*1} \underline{\psi}_\varepsilon(x) \leq 1\{x \leq 0\} \leq
\overline{ \psi}_\varepsilon.
\end{equation}
Note also that
%
\begin{equation}
\label{e**1} \underline{\psi}_\varepsilon(x)\mbox{ increases with }
\varepsilon\qquad\mbox{for all } x
\end{equation}
and
%
\begin{equation}
\label{e**2} \overline{\psi}_\varepsilon(x)\mbox{ decreases with }
\varepsilon\qquad\mbox{for all } x.
\end{equation}
Let $\underline{b}_\varepsilon$ and $\overline{b}_\varepsilon$ be the
two barriers corresponding to $\underline{\psi}_\varepsilon(x)$ and
$\overline{\psi}_\varepsilon$. The existence and uniqueness of these
two barriers follows by Theorem \ref{tglobalEandU}. From (\ref{e*1}) we
have that
\[
\overline{b}_\varepsilon(t) \leq \underline{b}_\varepsilon(t)
\]
for all $t$ and from (\ref{e**1}), (\ref{e**2}) that
\[
\overline{b}_\varepsilon(t) \mbox{ is increasing in } \varepsilon\qquad\mbox{for each } t
\]
and
\[
\underline{b}_\varepsilon(t) \mbox{ is decreasing in } \varepsilon\qquad\mbox{for each } t.
\]
Put
\[
\overline{b}_*(t) = \lim_{\varepsilon\downarrow0} \overline {b}_\varepsilon(t)
\]
and
\[
\underline{b}_*(t) = \lim_{\varepsilon\downarrow0} \underline
{b}_\varepsilon(t).
\]
Then
%
\begin{equation}
\label{e***} \overline{b}_*(t) \leq \underline{b}_*(t)
\end{equation}
and both of these functions give a stopping time with the correct
distribution for the case where $\psi$ is the indicator of
$(-\infty,0]$. Because of (\ref{e***}), it must be the case that
$\overline{b}_*(t) = \underline{b}_*(t)$ for Lebesgue almost all $t$.
\end{pf}

\section{Pricing claims}\label{Spricing}

Suppose that the asset price $(X_t)_{t \geq 0}$ is a geometric Brownian
motion given by
%
\begin{equation}
\label{estock} \frac{dX_t}{X_t} = \mu\,dt + \sigma\,dW_t,\vadjust{\goodbreak}
\end{equation}
where
$(W_t)_{t \geq 0}$ is a standard Brownian motion. We model default
using a diffusion $(Y_t)_{t \geq 0}$, where
%
%
%
\begin{equation}
\label{eBMdefault} dY_t = dB_t,
\end{equation}
with $(B_t)_{t \geq 0}$ another standard Brownian motion. We assume
that the Brownian motions $W$ and $B$ are correlated with correlation
$-1\leq\rho\leq1$; that is, the cross-variation of the two processes
satisfies
\[
[B,W]_t=\rho t, \qquad t \geq 0.
\]
We can assume without loss of generality that for two independent
Brownian motions $B', B''$ we have
\[
\cases{\displaystyle W_t = B_t',
\vspace*{2pt}
\cr
B_t = \rho B_t'+\sqrt{1-
\rho^2}B_t''.}
\]
In the following we will look at pricing contingent claims with a fixed
maturity $T>0$ and payoff of the form
\[
F(X_T)1\{\tau>T\}
\]
for the random time
\[
\tau:=\inf \biggl\{t>0\dvtx\lambda\int_0^t
\psi \bigl(Y_s-b(s) \bigr)\,ds>U \biggr\},
\]
where $U$ is an independent exponentially distributed random variable
with mean one.

Note that
\[
\mathbb{E}^x \bigl[F(X_T)1\{\tau>T\} \bigr] =
\mathbb{E}^x \biggl[F(X_T)\exp \biggl(-\lambda\int
_0^T\psi \bigl(Y_s-b(s) \bigr)\,ds
\biggr) \biggr].
\]
More generally, we will be interested in expressions of the form
\begin{eqnarray*}
&& \mathbb{E}^x \bigl[F(X_T) 1\{\tau>T\} \mid (X_s)_{0\leq s\leq t}, \tau>t \bigr]
\\
&&\qquad= \mathbb{E}^x \biggl[F(X_T)\exp \biggl(-\lambda
\int_t^T\psi \bigl(Y_s-b(s) \bigr)
\,ds \biggr) \Bigm| (X_s)_{0\leq s\leq t}, \tau>t \biggr],
\end{eqnarray*}
which we interpret as the price of the payoff at time $0\leq t\leq T$
given that default has not yet occurred.

Consider the Markov process $Z =(X,Y,V)$, where $X$, $Y$ are as above,
and $V$ is a process that, when started at $v$ is at $v+t$ after $t$
units of time, that is, $V_t = V_0+t$. The generator of $Z$ is
\[
A = (1/2) \sigma^2 x^2 D_x^2 +
\mu x D_x + (1/2) D_y^2 + \rho\sigma x
D_x D_y+ D_v.
\]
We want to compute
\[
\mathbb{E}^{(x,y)} \bigl[F(X_T) e^{-\int_0^T \lambda\psi(Y_s -
b(s))\,ds} \bigr]=
\mathbb{E}^{(x,y,0)} \bigl[F(X_T) e^{-\int_0^T
\lambda\psi(Y_s - b(V_s))\,ds} \bigr].
\]
The Feynman--Kac formula says that the solution to the PDE
%
\begin{equation}
\label{ediffusionKac} \cases{\displaystyle D_t u(x,y,v,t) = A
u(x,y,v,t) - \lambda\psi \bigl(y - b(v) \bigr) u(x,y,v,t), \vspace*{2pt}
\cr
u(x,y,v,0) = F(x),}
\end{equation}
satisfies
\[
\mathbb{E}^{(x,y)} \biggl[F(X_T) \exp \biggl(-\int
_0^T \lambda\psi \bigl(Y_s - b(s)
\bigr)\,ds \biggr) \biggr] = u(x,y,0,T).
\]
Thus, if we assume the Brownian motion $Y$ has a random starting point
$Y_0$ with density $f$, that is, independent of $(Y_t - Y_0)_{t \geq
0}$, then
\[
\mathbb{E}^{x} \biggl[F(X_T) \exp \biggl(-\int
_0^T \lambda\psi \bigl(Y_s - b(s)
\bigr)\,ds \biggr) \biggr] = \int_{\mathbb{R}}u(x,y,0,T)f(y)\,dy.
\]
Using this and the Markov property, one can find the function
$K(x,y,t)$ satisfying
\begin{eqnarray*}
&& K(X_t,Y_t,t)
\\
&&\qquad =\mathbb{E}^x \biggl[F(X_T)\exp
\biggl(-\lambda\int_t^T \psi \bigl(Y_s-b(s) \bigr)\,ds \biggr) \Bigm|
(X_s)_{0\leq s\leq t}, (Y_s)_{0\leq s\leq t}, \tau>t \biggr].
\end{eqnarray*}
The price at time $t$, given that we know the history of the price
process $X_t$ and that default has not happened up to time $t$, is
\begin{eqnarray*}
&& \mathbb{E} \bigl[F(X_T)1\{\tau>T\} \mid (X_s)_{0\leq s\leq t},
\tau>t \bigr]
\\
&&\qquad = \mathbb{E} \bigl[K(X_t,Y_t,t) \mid (X_s)_{0\leq s\leq t}, \tau>t \bigr]
\\
&&\qquad = \frac{\mathbb{E}[K(X_t,Y_t,t)1\{\tau>t\} \mid (X_s)_{0\leq s\leq
t}]}{\mathbb{E}[1\{\tau>t\} \mid (X_s)_{0\leq s\leq t}]}.
\end{eqnarray*}

It follows from the SDE for $X$ that
\[
B_t' = W_t = \frac{1}{\sigma} \biggl[
\log X_t - \log X_0 + \biggl(\frac{\sigma^2}{2} - \mu
\biggr) t \biggr],
\]
so if we observe the asset price $X$, then we can reconstruct the
standard Brownian motion $B'$. On the other hand,
\[
X_t = X_0 \exp \biggl(\sigma B_t'
- \biggl(\frac{\sigma^2}{2} - \mu \biggr) t \biggr).
\]

Now,
\begin{eqnarray*}
&& \mathbb{E} \bigl[K(X_t,Y_t,t)1\{\tau>t \}
\mid(X_s)_{0\leq s\leq t} \bigr]
\\
&&\qquad= \mathbb{E} \biggl[K \biggl(X_0 \exp \biggl( \sigma
B_t' - \biggl(\frac{\sigma^2}{2} - \mu \biggr) t
\biggr), Y_0 + \rho B_t'+\sqrt{1-\rho^2}B_t'', t \biggr)
\\
&&\hspace*{42pt}{} \times\mathbf{1} \biggl\{ \int_0^t
\psi \bigl(Y_0 + \rho B_s'+\sqrt{1-\rho^2}B_s'' - b(s) \bigr)\,ds
\leq U \biggr\}
\Bigm| X_0, \bigl(B_s' \bigr)_{0\leq s\leq t} \biggr].
\end{eqnarray*}
We therefore want to be able to compute for a function $c\dvtx\mathbb
{R}_+ \to\mathbb{R}$ the conditional expected value
\begin{eqnarray*}
&& \mathbb{E} \biggl[K \biggl(X_0 \exp \biggl( \sigma c(t) - \biggl(
\frac{\sigma^2}{2} - \mu \biggr) t \biggr), Y_0 + \rho c(t) +
\sqrt{1- \rho^2}B_t'', t
\biggr)
\\
&&\hspace*{32pt}{}\times\mathbf{1} \biggl\{ \int_0^t
\psi \bigl(Y_0 + \rho c(s) +\sqrt{1-\rho^2}B_s''
- b(s) \bigr)\,ds \leq U \biggr\}\Bigm| X_0 \biggr]
\\
&&\qquad= \mathbb{E} \biggl[K \biggl(X_0 \exp \biggl( \sigma c(t) -
\biggl(\frac{\sigma^2}{2} - \mu \biggr) t \biggr), Y_0 + \rho c(t) +
\sqrt{1-\rho^2}B_t'', t\biggr)
\\
&&\hspace*{66pt}{}\times\exp \biggl( - \int_0^t \psi
\bigl(Y_0 + \rho c(s) +\sqrt{1-\rho^2}B_s''
- b(s) \bigr)\,ds \biggr)\Bigm| X_0 \biggr]
\end{eqnarray*}
with $(B_t'')_{t \geq 0}$ a standard Brownian motion independent of
$X_0$. We can do this using Feynman--Kac.

The denominator in the formula for the price at time $t$ is a special case
of the numerator we have just calculated
with $K \equiv1$, and it can be dealt with in the same way.

%
%
\begin{figure}

\includegraphics{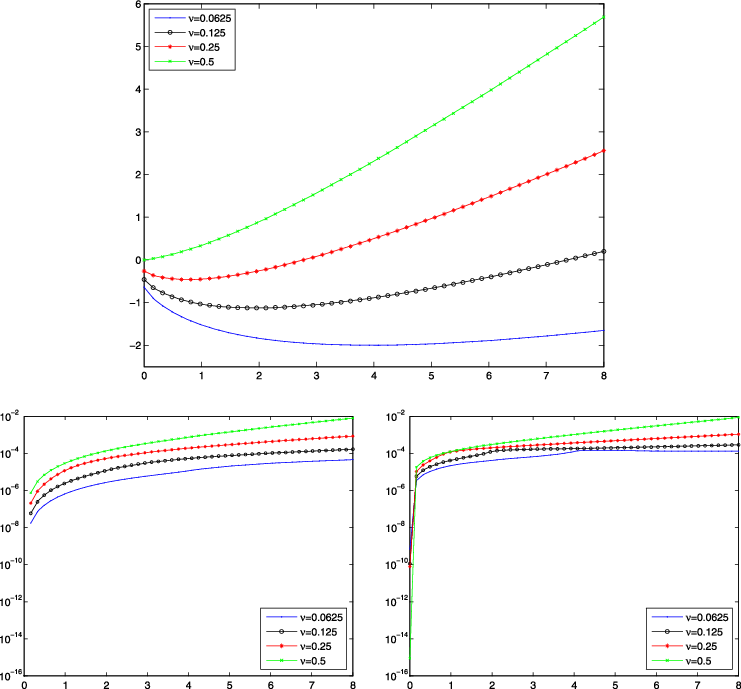}

\caption{This figure displays the results of the numerical experiments
described in Section~\protect\ref{Snumerical}. We fix the standard
deviation for the initial distribution of the credit index process $Y$
to be $\sigma=0.25$ and the killing parameter to be $\lambda=1$.
The first row gives the barriers for the rate parameters
$\nu=0.0625, 0.125, 0.25, 0.5$ of the exponential default time distribution.
The first (resp., second) panels in the second row give the relative
errors between
the actual survival function values $G(t)$ [resp., the actual hazard
function values $-g(t)/G(t)$] and the numerically
computed ones; see the text for details.}\label{figur}
\end{figure}

We have thus observed that computing the price of a contingent claim
reduces to solving certain PDEs with coefficients depending on the path
of the asset price.

\section{Numerical results}\label{Snumerical}
In this section we present the results of some\break  numerical experiments.
We solved the PDE/ODE system (\ref{ePDEODE}) using\break  the pseudo-spectral
implicit-explicit fourth order Runge--Kutta scheme\break  ARK4(3)6L[2]SA-ERK,
taking 8192 nodes and period 16, developed in \cite{MR03}. For the
function $\psi$ we used the Fej\'{e}r kernel of order 512 applied to
the indicator of the set $\{x \in\mathbb{R} \dvtx x < 0\}$; in other
words $\psi$ is the Ces\`{a}ro sum of the truncated Fourier series of
order 512 of the indicator of the set $\{x \in\mathbb{R} \dvtx x <
0\}$. The time horizon was taken to be $T=8$, the initial distribution
of the credit index process $Y$ was taken to be normal [$Y_0 \sim
N(0,\sigma^2)$ with standard deviation $\sigma=0.25$], and the time to
default was taken to have an exponential distribution [$G(t)=e^{-\nu
t}$ with rates $\nu=0.0625, 0.125, 0.25, 0.5$].

For the first experiment, we fix the killing parameter to $\lambda=1$.
We show the resulting barriers $b$ in Figure~\ref{figur}. We also show
the relative error between the survival function $G(t)$ and the
numerically computed value of $\int_{\mathbb{R}}u(x,t)\,dx$ [recall
(\ref{ePDE})], and the relative error between the hazard rate
$-g(t)/G(t)$ and the numerically computed value of
$\int_{\mathbb{R}}\psi(x-b(t))u(x,t)\,dx / \int_{\mathbb
{R}}u(x,t)\,dx$
[recall (\ref{egpsi})].

For the second experiment, we take the exponential rate to be
$\nu=0.125$ and the standard deviation to be $\sigma=0.25$. We look at
the graphs for when the killing parameter is $\lambda=1, 10, 50, 200$.
The barriers, together with the relative errors in the survival
functions and hazard rates are given in Figure~\ref{figur2}.

%
%
\begin{figure}

\includegraphics{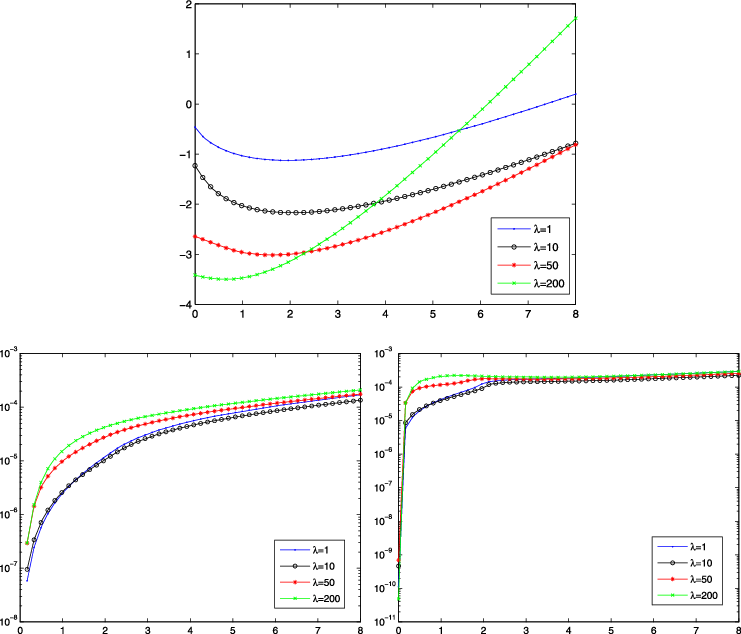}\vspace*{-3pt}

\caption{In this figure we fixed the standard deviation to $\sigma=0.25$
and the rate parameter to $\nu=0.125$. The first row gives the barriers
for the killing parameters $\lambda= 1, 10, 50, 200$. The first and second
panels in the second row give the relative errors for the survival function
(resp., the hazard function).}\label{figur2}\vspace*{-5pt}
\end{figure}

\section{Calibrating the default distribution using CDS rates}\label{Scalibration}

For the sake of completeness, we review briefly the scheme proposed in
\cite{DP11} for determining the distribution of the time to default.

A credit default swap (CDS) is a contract between two parties.
The buyer of the swap makes a number of predetermined
payments until the moment of default.
The seller is liable to pay the unrecovered value of the underlying
bond in the event of a default before maturity.
Normalizing the notional value of the bond to $1$,
the seller's contingent payment is $1-R$,
where $R \in(0,1)$ is the recovery rate,
which we take to be constant.
The premium payments are made at a set of times $\{t_i\}$.
The maturities are a subset of the premium payment times;
that is, they are of the form $T_0=0$, $T_j=t_{k(j)}$, $j=1,\dots,n$.
For $j=1,\dots,n$ there is
an upfront premium $\pi_j^0$ and a running premium rate $\pi_j^1$
(having accrual factors $\delta_i$). Denote the
price at time zero of a zero coupon risk-free bond with maturity
$t_j$ by $p_0(t_j)$.
It follows from standard nonarbitrage arguments that
%
%
%
\begin{eqnarray}\label{eCDS}
&& \pi_j^0 + \pi_j^1\sum_{i=k(j-1)}^{k(j)-1}\delta_ip_0(t_i)G(t_i)
\nonumber\\[-10pt]\\[-10pt]
&&\qquad = (1-R)\sum_{i=k(j-1)+1}^{k(j)} p_0(t_i)\bigl(G(t_{i-1})-G(t_i) \bigr),\nonumber
\end{eqnarray}
where $G(t)=\mathbb{P}\{\tau>t\}$ is the tail of the distribution of the
time to default.\vadjust{\goodbreak}

Suppose now that the
default distribution has piecewise constant hazard rate; that is, that
\[
G(t)=\exp \biggl(-\int_0^t h(s)\,ds \biggr),
\qquad t\geq0,
\]
where $h(s)=h_i$ for $s\in[T_i,T_{i+1})$. Given the market data
$(\pi_1^0,\pi_1^1), (\pi_2^0,\pi_2^1), \dots$ we can find, using
equation (\ref{eCDS}), the constants $h_0, h_1,\dots.$

We use the following procedure to find the barrier $b$. Set $\nu=h_0$
and $G(t)=e^{-\nu t}$. Given the initial density $f$, which we can
choose to be any strictly positive function $f$, that is, twice
continuously differentiable with bounded $f$, $f'$ and $f''$, we want
to find a barrier such that for $0\leq t\leq T=T_1$ we have
\[
e^{-\nu t} = \mathbb{E} \biggl[\int_{\mathbb{R}} f(x)\exp
\biggl(-\lambda\int_0^t \psi
\bigl(x+B_s-b(s) \bigr)\,ds \biggr)\,dx \biggr].
\]
This can be achieved by solving the ODE/PDE system (\ref{ePDEODE}).
Next, set $\nu_1=h_1$, $T=T_2-T_1$, $f_1(x)=\mathbb{E} [f(x)\exp
(-\lambda\int_0^{T_1} \psi(x+B_s-b(s))\,ds ) ]$, and find a barrier
with $b_1(0)=b(T_1)$ such that on $0\leq t< T=T_2-T_1$ we have
\[
e^{-\nu_1 t} = \mathbb{E} \biggl[\int_{\mathbb{R}}
f_1(x)\exp \biggl(-\lambda\int_0^t
\psi \bigl(x+B_s-b_1(s) \bigr)\,ds \biggr)\,dx \biggr].
\]
This procedure can be repeated until we find a function $b$ on
$[0,T_n]$, that is, continuously differentiable everywhere, except
perhaps the finite number of points~$T_1,\dots,T_n$.

\section{Duhamel's formula}\label{ADuhamel}

For the sake of reference, we provide a statement of Duhamel's formula.
Given functions $v\dvtx\mathbb{R}\times\mathbb{R}_+ \to\mathbb
{R}$ and
$b\dvtx\mathbb{R}_+ \to\mathbb{R}$, the solution of
%
%
%
\begin{equation}\label{e_ub}
\cases{\displaystyle \biggl(\partial_t-
\frac{\partial_{xx}}{2} \biggr)u =-\psi_b v, &\quad $x\in\mathbb{R}$,
$t>0$, \vspace*{2pt}
\cr
u(x,0) = f(x), &\quad$x \in\mathbb{R}$,}
\end{equation}
is given by
%
%
\begin{eqnarray}
\label{eduhamel} u(x,t) &=& \bigl[G\ast(f\delta_{t=0}) \bigr](x,t) +
\bigl[G\ast(-\psi_{b}v) \bigr](x,t)\nonumber
\\
&=& \int_{\mathbb{R}}G(x-y,t)f(y)\,dy
\\
&&{} -\int_0^t \int_{\mathbb {R}}G(x-y,t-s)\psi_{b(s)}(y) v(y,s)\,dy\,ds,\nonumber
\end{eqnarray}
where
%
%
%
\begin{equation}
\label{eheatkernel} G(x,t):=\frac{1}{\sqrt{2\pi t}}e^{-x^2/(2t)}, \qquad x \in
\mathbb{R}, t > 0.
\end{equation}

\section*{Acknowledgments} We are grateful to Professor Jon Wilkening
for helpful comments regarding Section~\ref{Snumerical} and to an
anonymous referee for suggestions which improved this paper.



\printaddresses

\end{document}